\newcommand{\diff}[1]{\ensuremath{\operatorname{d}\!{#1}}}
\newtheorem{thm}{Theorem}
\newtheorem{lem}[thm]{Lemma}
\begin{document}


\title{Worst-case Quantum Hypothesis  Testing with Separable Measurements}

\author{Le Phuc Thinh}
\email{thinh.le@itp.uni-hannover.de}
\affiliation{Centre for Quantum Technologies, National University of Singapore, Singapore}
\affiliation{Institut f{\"u}r Theoretische Physik, Leibniz Universit{\"a}t Hannover, Appelstr. 2, 30167 Hannover, Germany}

\author{Michele Dall'Arno}
\email{dallarno.michele@yukawa.kyoto-u.ac.jp}
\affiliation{Centre for Quantum Technologies, National University of Singapore, Singapore}
\affiliation{Yukawa Institute for Theoretical Physics, Kyoto University, Kitashirakawa Oiwakecho, Sakyoku, Kyoto 606-8502, Japan}
\affiliation{Faculty of Education and Integrated Arts and Sciences, Waseda University, 1-6-1 Nishiwaseda, Shinjuku-ku, Tokyo 169-8050, Japan}

\author{Valerio Scarani}
\email{physv@nus.edu.sg}
\affiliation{Centre for Quantum Technologies, National University of Singapore, Singapore}
\affiliation{Department of Physics, National University of Singapore, Singapore}

\begin{abstract}
For any pair of quantum  states (the hypotheses), the task of  binary  quantum hypotheses  testing  is  to derive  the tradeoff   relation  between   the  probability $p_{01}$ of rejecting  the  null  hypothesis  and $p_{10}$ of accepting  the alternative  hypothesis.  The case when both hypotheses  are  explicitly  given was solved in the pioneering work by Helstrom. Here, instead, for any  given null hypothesis as a pure state, we consider the worst-case  alternative hypothesis that  maximizes $p_{10}$  under a  constraint  on  the distinguishability of such  hypotheses.  Additionally, we restrict the optimization to separable measurements, in order to describe tests that are performed locally. The case $p_{01}=0$ has been recently studied under the name of ``quantum state verification''. We show that the problem can be cast as a semi-definite program (SDP). Then we study in detail the two-qubit case. A comprehensive study in parameter space is done by solving the SDP numerically. We also obtain analytical solutions in the case of commuting hypotheses, and in the case where the two hypotheses can be orthogonal (in the latter case, we prove that the restriction to separable measurements generically prevents perfect distinguishability). In regards to quantum state verification, our work shows the existence of more efficient strategies for noisy measurement scenarios.
\end{abstract}


\rightline{YITP-20-101}

\maketitle

\section{\label{sec:intro}Introduction}

The         task        of         quantum        hypothesis
testing~\cite{helstrom1969quantum}, a subfield of quantum state estimation~\cite{Hradil97, Paris_book, Holevo_book}, is to optimally identify,
according to some given  payoff function, an unknown quantum
state  given  as a  black  box.   The strategy  consists  of
performing a  quantum measurement, whose  optimality depends
upon the payoff function and the prior information about the
state, available  in the form of  a probability distribution
over    the    state    space.     Several    discrimination
problems~\cite{discriminate_unitary,   discriminate_channel,
  quantum_reading, BDD11, DBDMJD12, qillumination, quantumstatediscrimination, discriminate_mixedstate, discriminate_errormargin, maxconfidence} are based
on quantum hypothesis testing. 

In  the simplest  non-trivial instance  of the  problem, the
prior  distribution has  support over  two states  only, the
\textit{null}   and  the   \textit{alternative}  hypotheses.
Hence, binary quantum hypothesis  testing corresponds to the
derivation  of  the  tradeoff  relation  between  two  error
probabilities:  I) the  probability  of  rejecting the  null
hypothesis,  and  II)  the   probability  of  accepting  the
alternative hypothesis.   The case when such  hypotheses are
given    explicitly     was    solved     analytically    by
Helstrom~\cite{helstrom1969quantum}.

Here, instead, we consider the  case in which only one state
(the  null hypothesis)  is explicitly  given. For  the other
state   (the   alternative   hypothesis),  we   consider   a
constrained worst-case scenario.  The worst-case alternative
hypothesis  is  the one  that  maximizes  the type-II  error
probability,   under   a   given    lower   bound   on   the
distinguishability of the  two hypotheses.  Additionally, we
consider  multipartite  hypotheses,   and  we  restrict  the
optimization  to separable  measurements  only.  This  setup
generalizes      the      so-called     ``quantum      state
verification''~\cite{qv,     qv_bipartite,    qv_dicke,        qv_2qubit,
  qv_correspondence,  qv_noniid, qv_pure},  by relaxing  the
assumption that the type-I error probability is null.

Our   first   contribution   is   a   formulation   of   the
aforementioned problem as a  semi-definite program, that can
be  efficiently  solved  with  readily  available  numerical
tools. Then, we specify to  the case in which the hypotheses
are two-qubit states,  and we derive an  analytical form for
the worst-case hypothesis.   Finally, we analytically derive
the  tradeoff  relation  between type-I  and  type-II  error
probabilities in the case when the hypotheses commute. In regards to quantum state verification, our work shows the existence of more efficient strategies in certain parameter regimes for noisy measurement scenarios.

The   structure   of  this   paper   is   as  follows.    In
Section~\ref{sec:hypothesis_testing}  we recall  the general
problem of  quantum hypothesis testing and  we introduce the
specific problem  addressed here.   In Section~\ref{sec:sdp}
we reformulate  our problem as a  semi-definite program, and
we analytically derive the worst-case alternative hypothesis
in  the  two-qubit  case.  In  Section~\ref{sec:results}  we
analytically derive the tradeoff relation between type-I and
type-II  error probabilities  for commuting  hypotheses. Section~\ref{sec:conclusion} summarizes our results.

\section{\label{sec:hypothesis_testing}Hypothesis testing of quantum states}

The simplest scenario of hypothesis testing is binary {\em quantum state discrimination} between $\rho_0$ and $\rho_1$. In other words, one is asked to decide which is more likely between two hypotheses $H_0$ --- the {\em null hypothesis} --- representing the fact that the unknown state is $\rho_0$, and $H_1$ --- the {\em alternative hypothesis} --- corresponding to the unknown state being $\rho_1$. The decision process can be formalized by a POVM $\{\Omega,\openone-\Omega\}$ where the element $\Omega$ accepts $H_0$ and $\openone-\Omega$ accepts $H_1$. This naturally gives rise to two errors, type I or {\em false positive} $p_{01}=\tr(\rho_0(\openone-\Omega))$ and type II or {\em false negative} $p_{10}=\tr(\rho_1\Omega)$. False positive probability captures the situation that the decision process accepts $H_1$ when hypothesis $H_0$ is true. False negative probability corresponds to the other situation where one accepts $H_0$ when $H_1$ is true. Therefore, in this language, the problem is to design an {\em optimal measurement} $\Omega$ that optimizes certain figure-of-merit. For example, Helstrom strategy minimizes the average probability of error $p_0p_{01}+p_1p_{10}$ where $p_0$ is the {\em a priori} probability of occurrence of hypothesis $H_0$ and ditto for $p_1$.

Several problems in quantum information such as quantum channel coding~\cite{channelcoding} and quantum illumination~\cite{qillumination} can be seen as hypothesis testing problems by assigning appropriate sets to hypothesis and choosing appropriate figures-of-merit (see e.g.~\cite{hayashi_2006,hayashi_2009}). Here we look at the task that has been called {\em quantum state verification}~\cite{qv}. In this task, $H_0$ is the state $\ketbra{\psi}$, and $H_1$ is the set of states $\sigma$ such that $\bra{\psi}\sigma\ket{\psi}\leq1-\epsilon$. Previous works \cite{qv, qv_bipartite, qv_dicke, qv_2qubit, qv_correspondence,  qv_noniid, qv_pure} considered strategies that have no false positive, i.e. $p_{01}=0$, and set out to minimize the worst-case probability of false negative
$$
p_{10}(\epsilon):=\min_{\substack{0\preceq\Omega\preceq\openone \\ \bra{\psi}\Omega\ket{\psi}=1\\ \Omega\in\text{[set]}}}\max_{\substack{\sigma\succeq0 \\ \tr(\sigma)=1 \\ \bra{\psi}\sigma\ket{\psi}\leq1-\epsilon}}\tr(\Omega\sigma)\,.
$$ The set to which $\Omega$ belongs can be that of all effects, or a restricted one. When dealing with composite systems, a particularly relevant set is the set $\text{SEP}$ of \textit{separable} measurements because they are easier to implement than LOCC or richer local measurement classes and at the same time could provide a bound on the performance of other classes. In this work, we shall focus on this one and leave possible extensions to future work.

Here we relax the condition $p_{01}=0$ to $p_{01}\leq\delta$, leading to the optimisation
\begin{align}\label{eq:main_opt_sep}
p_{10}(\delta,\epsilon):=\min_{\substack{0\preceq\Omega\preceq\openone \\ \bra{\psi}\Omega\ket{\psi}\geq1-\delta \\ \Omega\in\text{SEP}}}\max_{\substack{\sigma\succeq0 \\ \tr(\sigma)=1 \\ \bra{\psi}\sigma\ket{\psi}\leq1-\epsilon}}\tr(\Omega\sigma)\,.
\end{align}
This generalisation is relevant, as it allows the study of the \textit{tradeoff} between $\delta$ and $p_{10}(\delta,\epsilon)$. From the technical point of view, this study does not constitute a straightforward extension of previously employed mathematical tools for the following reason. The condition $\bra{\psi}\Omega\ket{\psi}=1$ forces $\Omega$ to commute with $\ketbra{\psi}$, which provides a significant simplification in the number of parameters and structure of the problem. When that condition is relaxed to $\bra{\psi}\Omega\ket{\psi}\geq 1-\delta$, commutativity can no longer be assumed \textit{a priori} (and we shall show that, for some values of $\delta$ and the other parameters, the optimal strategy is indeed \textit{not} the commuting one).

\section{\label{sec:sdp}Reformulations of the optimisation}

In this section, we first show that the optimisation \eqref{eq:main_opt_sep} for separable measurements can be cast as a semidefinite program (SDP), which allows for reliable numerical solutions. Then, for the case of two-qubit states, we solve the optimisation of the inner problem, thus casting the optimisation in a form which will allow deriving some analytical results in Section \ref{sec:results}.

\subsection{Reformulation as a SDP}
The problem we are considering is at first sight a min max problem involving two variables $\Omega,\sigma$ that appears bilinearly in the objective function. Though fixing each variable is separately a SDP and can be reliably solved to any precision, there is no guarantee on the optimality of remaining outer optimization if one deploys numerical methods. A closer analysis of the optimization problem shows that one can in fact use duality theory of semidefinite programming to reformulate the problem. We refer the reader to the classic book~\cite{boyd2004convex} for more information on duality in optimization.

\begin{lem}
The optimisation \eqref{eq:main_opt_sep} can be reformulated as a semidefinite program
\begin{equation}\label{eq:main_opt_sep_sdp}
\begin{aligned}
p_{10}(\delta,\epsilon)=&\min_{\Omega,y_1,y_2} y_1+(1-\epsilon)y_2 \\
&\text{ s. t. } 0\preceq\Omega\preceq\openone\\
&\phantom{\text{ s. t. }} \bra{\psi}\Omega\ket{\psi}\geq1-\delta\\
&\phantom{\text{ s. t. }} \Omega\in\textup{SEP}\\
&\phantom{\text{ s. t. }} y_1\openone+y_2\ketbra{\psi}\succeq\Omega \\
&\phantom{\text{ s. t. }}  y_1\in\mathbb{R}, y_2\geq0 \\
\end{aligned}
\end{equation}
\end{lem}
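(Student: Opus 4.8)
The plan is to eliminate the inner maximization over $\sigma$ by Lagrangian duality, so that the min--max collapses to a single minimization. First I would fix an arbitrary feasible $\Omega$ and treat the inner problem
\[
\max_{\sigma\succeq0,\ \tr\sigma=1,\ \bra\psi\sigma\ket\psi\le1-\epsilon}\tr(\Omega\sigma)
\]
as a semidefinite program in the single variable $\sigma$. Its feasible set is the intersection of the density matrices with a closed half-space, hence compact, so the maximum is attained and finite. I would then attach a free real multiplier $y_1$ to the equality $\tr\sigma=1$ and a nonnegative multiplier $y_2\ge0$ to the inequality $\bra\psi\sigma\ket\psi\le1-\epsilon$, and rewrite the Lagrangian as
\[
\tr(\Omega\sigma)+y_1(1-\tr\sigma)+y_2\bigl(1-\epsilon-\bra\psi\sigma\ket\psi\bigr)=y_1+(1-\epsilon)y_2+\tr\!\bigl[(\Omega-y_1\openone-y_2\ketbra\psi)\sigma\bigr].
\]
Maximizing the last trace over $\sigma\succeq0$ gives $+\infty$ unless $\Omega-y_1\openone-y_2\ketbra\psi\preceq0$, in which case it vanishes. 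Thus the dual function equals $y_1+(1-\epsilon)y_2$ exactly on the domain $y_1\openone+y_2\ketbra\psi\succeq\Omega$, $y_2\ge0$, which is precisely the objective together with the last two constraints of \eqref{eq:main_opt_sep_sdp}.

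The second step is to promote weak duality to equality via Slater's condition. For $0\le\epsilon<1$ there is a strictly feasible $\sigma\succ0$ with $\tr\sigma=1$ and $\bra\psi\sigma\ket\psi<1-\epsilon$ (take any full-rank state whose overlap with $\ket\psi$ is made arbitrarily small, which is possible since $1-\epsilon>0$), and this strict feasibility of the inner primal guarantees strong duality for the inner SDP. Hence, for every feasible $\Omega$, the inner maximum equals $\min_{y_1,y_2}\{\,y_1+(1-\epsilon)y_2:\ y_1\openone+y_2\ketbra\psi\succeq\Omega,\ y_2\ge0\,\}$.

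Finally I would merge the two nested minimizations. Since the dualized inner value is itself a minimum over $(y_1,y_2)$, and the outer minimization runs over $\Omega$ subject to the convex constraints $0\preceq\Omega\preceq\openone$, $\bra\psi\Omega\ket\psi\ge1-\delta$ and $\Omega\in\mathrm{SEP}$ (none of which involve $y_1,y_2$), the composite problem is the joint minimization over $(\Omega,y_1,y_2)$ displayed in \eqref{eq:main_opt_sep_sdp}, with $\Omega$ coupling to the dual variables only through the single linear matrix inequality $y_1\openone+y_2\ketbra\psi\succeq\Omega$. The main obstacle I anticipate is not the duality computation, which is routine, but justifying strong duality uniformly across the admissible range of $\epsilon$ (and dealing with the degenerate boundary $\epsilon\to1$ by continuity), together with confirming that carrying the $\mathrm{SEP}$ constraint along is legitimate --- it is, because $\mathrm{SEP}$ is a convex cone and the inner dualization acts only on $\sigma$, leaving the feasible region of $\Omega$ untouched.
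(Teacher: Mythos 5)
Your proposal follows essentially the same route as the paper: dualize the inner maximization over $\sigma$ with multipliers $y_1$ (for $\tr\sigma=1$) and $y_2\ge 0$ (for $\bra{\psi}\sigma\ket{\psi}\le 1-\epsilon$), obtain the dual $\min\{y_1+(1-\epsilon)y_2 : y_1\openone+y_2\ketbra{\psi}\succeq\Omega,\ y_2\ge0\}$, and merge the two nested minimizations into a joint one. The Lagrangian computation and the merging step are correct. The one substantive difference is where you invoke Slater's condition: you certify strong duality by strict feasibility of the \emph{primal} (a full-rank $\sigma$ with $\bra{\psi}\sigma\ket{\psi}<1-\epsilon$), which genuinely fails at $\epsilon=1$ --- there the constraint reads $\bra{\psi}\sigma\ket{\psi}\le 0$, which no $\sigma\succeq0$ satisfies strictly, so no interior point exists. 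Since $\epsilon=1$ is not a negligible boundary case here (the paper devotes a whole subsection to solving it), the ``handle $\epsilon\to1$ by continuity'' remark would need to be made precise. The paper avoids the issue entirely by checking strict feasibility of the \emph{dual} instead: for any Hermitian $\Omega$ one can pick $y_2>0$ and $y_1$ large enough that $y_1\openone+y_2\ketbra{\psi}-\Omega\succ0$; combined with feasibility of the primal (e.g.\ any state orthogonal to $\ket{\psi}$ when $\epsilon=1$), this yields strong duality uniformly in $\epsilon\in(0,1]$. I would recommend switching to the dual-side Slater argument. A final minor point: calling the result a semidefinite program also requires that the $\mathrm{SEP}$ constraint be SDP-representable, which the paper supports with a citation (it is an exponentially large but exact SDP constraint); your observation that $\mathrm{SEP}$ is a convex cone untouched by the inner dualization is correct but does not by itself establish this.
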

\begin{proof}
The constraints on $\Omega$ (outer optimisation) remain the same, while we replace the inner optimisation
$$\max\{\tr(\Omega\sigma):\sigma\succeq0,\tr\sigma=1,\bra{\psi}\sigma\ket{\psi}\leq1-\epsilon\}$$
by its dual, which is the semidefinite program
$$\min\{y_1+(1-\epsilon)y_2:y_1\openone+y_2\ketbra{\psi}\succeq\Omega^\dagger,y_2\geq0\}\,.$$ Moreover, strong duality holds because the primal is feasible, and thanks to $\Omega^\dagger=\Omega$ the dual is strictly feasible (choose $y_2>0$ such that $(y_1\openone-\Omega+y_2\ketbra{\psi}\succ0$). This means that the primal and dual optimum are the same, and also the primal optimum is attained. Hence, our minimax problem becomes~\eqref{eq:main_opt_sep_sdp}. Note that separability is a SDP constraint albeit exponential in size~\cite{separable_SDP} and not just a hierarchy of SDP constraints~\cite{DPS_hierarchy}.
\end{proof}

\subsection{Two-qubit states}

For two-qubit states $\ket{\psi}=\cos\theta\ket{00}+\sin\theta\ket{11}$, we can proceed with additional analytic derivations. Without loss of generality, we consider the regime of parameters where $\theta\in[0,\pi/4]$, $\epsilon\in(0,1]$ and $\delta\in[0,1]$.

\begin{lem}\label{lemma2}
For two-qubit pure states $\ket{\psi}=\cos\theta\ket{00}+\sin\theta\ket{11}$, the optimisation~\eqref{eq:main_opt_sep_sdp} reduces to an optimisation over real variables
\begin{equation}\label{eq:optlemma2}
\begin{aligned}
p_{10}(\delta,\epsilon)=&\min_{t,z,x,\omega,y_1,y_2} y_1+(1-\epsilon)y_2 \\
&\text{ s. t. } 0\preceq\begin{pmatrix}
t+z & x\\
x & t-z
\end{pmatrix}
\preceq\openone\\
&\phantom{\text{ s. t. }} 0\leq\omega\leq1\\
&\phantom{\text{ s. t. }} t+z\geq1-\delta\\
&\phantom{\text{ s. t. }} \omega\geq\abs{x\cos2\theta+z\sin2\theta}\\
&\phantom{\text{ s. t. }} \begin{pmatrix}
y_1+y_2-(t+z) & -x\\
-x & y_1-(t-z)
\end{pmatrix}\succeq0 \\
&\phantom{\text{ s. t. }} y_1-\omega\geq0 \\
&\phantom{\text{ s. t. }}  y_1\in\mathbb{R}, y_2\geq0 \\
\end{aligned}
\end{equation}
\end{lem}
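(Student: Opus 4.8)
The plan is to use the symmetries of the null hypothesis $\ket{\psi}=\cos\theta\ket{00}+\sin\theta\ket{11}$ to collapse the matrix variable $\Omega$ to a few real numbers, and then to invoke the Peres--Horodecki criterion to convert the separability constraint into a single scalar inequality. Everything else is then a block-by-block rewriting of \eqref{eq:main_opt_sep_sdp}.

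First I would identify the group $G$ fixing $\ket{\psi}$: the phase family $Z_\phi\otimes Z_{-\phi}$ with $Z_\phi=\mathrm{diag}(1,e^{i\phi})$, the qubit swap, and complex conjugation in the computational basis. Each leaves $\ket{\psi}$, $\openone$ and the cone $\textup{SEP}$ invariant (swap and conjugation send product operators to product operators, and conjugation preserves positivity), while the objective $y_1+(1-\epsilon)y_2$ does not contain $\Omega$ at all. Hence, twirling any feasible $\Omega$ over $G$ to $\bar\Omega$ keeps every constraint satisfied with the same objective value: each constraint is convex and $G$-invariant, and the dominating operator $y_1\openone+y_2\ketbra{\psi}$ is itself $G$-invariant, so $y_1\openone+y_2\ketbra{\psi}\succeq g\,\Omega\, g^\dagger$ for all $g$ averages to $y_1\openone+y_2\ketbra{\psi}\succeq\bar\Omega$. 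The phase twirl annihilates every matrix element linking the eigenspaces of $Z_\phi\otimes Z_{-\phi}$, leaving a $2\times2$ block on $\mathrm{span}\{\ket{00},\ket{11}\}$ and decoupled diagonal entries on $\ket{01},\ket{10}$; swap equates the latter two to a common value $c$, and conjugation makes all entries real. Thus, without loss of generality, $\bar\Omega$ is real and block diagonal, with a $2\times2$ block and a scalar block $c\,\openone$.

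Next I would parametrise the surviving block. Writing the $2\times2$ block in the basis $\{\ket{\psi},\ket{\psi^\perp}\}$ as $\left(\begin{smallmatrix} t+z & x\\ x & t-z\end{smallmatrix}\right)$ and setting $\omega=c$, the remaining constraints split cleanly. Since $\openone$ and $\ketbra{\psi}$ are both block diagonal and $\ketbra{\psi}$ vanishes on $\mathrm{span}\{\ket{01},\ket{10}\}$: the condition $0\preceq\Omega\preceq\openone$ becomes the first displayed line together with $0\le\omega\le1$; the fidelity bound is simply $\bra{\psi}\Omega\ket{\psi}=t+z\ge1-\delta$; and $y_1\openone+y_2\ketbra{\psi}\succeq\Omega$ separates into the $2\times2$ semidefinite constraint on $\{\ket{\psi},\ket{\psi^\perp}\}$ (with off-diagonal $-x$) and the scalar inequality $y_1-\omega\ge0$ on $\{\ket{01},\ket{10}\}$.

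Finally, the separability constraint. For two qubits $\textup{SEP}$ equals the set of PPT operators, so $\Omega\in\textup{SEP}$ iff $\Omega^{T_B}\succeq0$. I would rotate the $2\times2$ block back to the computational basis, where a short calculation gives $\bra{00}\Omega\ket{11}=x\cos2\theta+z\sin2\theta$; the partial transpose carries this element into the $\{\ket{01},\ket{10}\}$ sector, so $\Omega^{T_B}$ is block diagonal with middle block $\left(\begin{smallmatrix} c & x\cos2\theta+z\sin2\theta\\ x\cos2\theta+z\sin2\theta & c\end{smallmatrix}\right)$, whose positivity is exactly $\omega\ge\abs{x\cos2\theta+z\sin2\theta}$. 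Assembling these pieces reproduces \eqref{eq:optlemma2}. I expect the two delicate points to be the symmetrisation step, namely verifying that $\textup{SEP}$ is genuinely preserved by the swap and by the antiunitary conjugation (not only by the manifestly local phase twirl), and the basis rotation feeding into the partial transpose, which is precisely where the factors $\cos2\theta$ and $\sin2\theta$ are produced.
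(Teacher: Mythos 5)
Your proposal is correct and follows essentially the same route as the paper's own proof: twirling over the local phase family $U_\phi\otimes U_{-\phi}$, symmetrising under the swap and under complex conjugation, and then invoking the Peres--Horodecki (PPT) criterion to turn separability into the scalar inequality $\omega\geq\abs{x\cos2\theta+z\sin2\theta}$. You in fact spell out the partial-transpose computation of $\bra{00}\Omega\ket{11}=x\cos2\theta+z\sin2\theta$ more explicitly than the paper does, but the argument is the same.
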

\begin{proof}
We first spend the symmetry present in the state. Define \begin{align}
\Omega_a:=\frac{1}{2\pi}\int_0^{2\pi}(U_{\phi} \otimes U_{-\phi})\Omega(U_{\phi} \otimes U_{-\phi})^\dagger \diff\phi
\end{align}
with $U_{\phi}=\ketbra{0}+e^{i\phi}\ketbra{1}$. For any feasible $(\Omega,y)$, the pair $(\Omega_a,y)$ remains feasible with the same value of the objective function. Moreover, the state is also invariant under swapping $S$ of two qubits, so that $(\bar{\Omega}_a,y)$ with $\bar{\Omega}_a:=(\Omega_a+S\Omega_aS^\dagger))/2$ is feasible as well. Lastly, $\bar{\Omega}_a$ can be taken to be real symmetric because the feasible region is preserved under taking entrywise complex conjugate, and the objective value is unchanged. We note that this same symmetrisation was carried out in \cite{qv_2qubit} on the primal inner problem, thanks to the assumption that $\Omega$ commutes with $\ket{\psi}\bra{\psi}$. It's by looking at the dual that we noticed that the symmetry is independent of the commutation assumption.

This observation simplifies the number of variables in our optimisation. Specifically, let $\ket{\psi^\perp}=-\cos\theta\ket{00}+\sin\theta\ket{11}$, it suffices to optimize over real symmetric matrices
\begin{align}
\bar{\Omega}_a = \begin{pmatrix}
t+z & x & 0 & 0\\
x & t-z & 0 & 0\\
0 & 0 & \omega & 0\\
0 & 0 & 0 & \omega
\end{pmatrix}
\end{align}
in the ordered basis $\{\ket{\psi},\ket{\psi^\perp},\ket{01},\ket{10}\}$. Writing out the separability constraint, which for qubits is equivalent to positive partial transpose, we arrive at the final form given by the Lemma.
\end{proof}

Remarkably, what was the inner optimisation (now optimisation over $y_1$ and $y_2$) can be further solved analytically; besides, one can set $t=1-\delta-z$ and $\omega=\abs{x\cos2\theta+z\sin2\theta}$ without loss of generality. The lengthy proof of these steps is presented in Appendix \ref{sslengthy}. The farthest version of the optimisation that we can reach analytically reads: 
\begin{lem}\label{lemmafinal}
For two-qubit pure states $\ket{\psi}=\cos\theta\ket{00}+\sin\theta\ket{11}$, the optimisation~\eqref{eq:main_opt_sep_sdp} reduces to
\begin{equation}\label{eq:final}
\begin{aligned}
p_{10}(\delta,\epsilon)=&\min_{z,x} f(\abs{x\cos2\theta+z\sin2\theta}) \\
&\text{ s. t. } 0\preceq\begin{pmatrix}
1-\delta & x\\
x & 1-\delta-2z
\end{pmatrix}
\preceq\openone\\
&\phantom{\text{ s. t. }} 1-\delta-2z+\sqrt{\frac{1-\epsilon}{\epsilon}}\abs{x}\leq\abs{x\cos2\theta+z\sin2\theta}
\end{aligned}
\end{equation}
where
\begin{equation*}
f(y_1^*):=y_1^*+(1-\epsilon)\left[1-\delta-y_1^*+\frac{x^2}{y_1^*-(1-\delta-2z)}\right]\,.
\end{equation*}
\end{lem}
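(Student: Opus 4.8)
The plan is to eliminate, one at a time, the auxiliary variable $\omega$ and the dual variables $y_2$ and $y_1$ from the program of Lemma~\ref{lemma2}, using only monotonicity and convexity. It is convenient to write $a=t+z$ and $b=t-z$ for the diagonal entries of the $\{\ket{\psi},\ket{\psi^\perp}\}$ block of $\Omega$, so that $\bra{\psi}\Omega\ket{\psi}\geq1-\delta$ becomes $a\geq1-\delta$.

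I would first dispatch $\omega$, which enters only through $\omega\geq\abs{x\cos2\theta+z\sin2\theta}$, $\omega\leq1$ and $y_1\geq\omega$, and not through the objective. A smaller $\omega$ only relaxes the bound $y_1\geq\omega$, so $\omega=\abs{x\cos2\theta+z\sin2\theta}$ is optimal. Next I would eliminate $y_2$: at fixed $\Omega$ and $y_1$ the objective is nondecreasing in $y_2$ (its coefficient $1-\epsilon\geq0$), so $y_2$ is pushed to the smallest value compatible with the dual positivity block. Taking the Schur complement of that block with respect to its $(2,2)$ entry $y_1-b\geq0$ turns positivity into $y_2\geq a-y_1+x^2/(y_1-b)$, and substituting this minimal value reproduces the function $f(y_1)$ of the statement with $a$, $b$ in the roles of $1-\delta$, $1-\delta-2z$.

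I would then fix $t=1-\delta-z$, i.e. $a=1-\delta$. Differentiating shows $\partial f/\partial t=(1-\epsilon)[1+x^2/(y_1-b)^2]\geq0$ at fixed $z,x,y_1$, so the objective is nondecreasing in $t$ and the smallest feasible $t$ is wanted. The only competing lower bound is $\Omega\succeq0$, namely $t\geq\sqrt{z^2+x^2}$; but $1-\delta-z\geq\sqrt{z^2+x^2}$ is equivalent to $(1-\delta)(1-\delta-2z)\geq x^2$, which is exactly the positivity of the block $\left(\begin{smallmatrix}1-\delta & x\\ x & 1-\delta-2z\end{smallmatrix}\right)$ retained in~\eqref{eq:final}. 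Hence on that feasible region $t=1-\delta-z$ is admissible and optimal. It remains to minimize the strictly convex $f$ over $y_1>b$: the unconstrained minimizer is $\hat{y}_1=(1-\delta-2z)+\sqrt{(1-\epsilon)/\epsilon}\,\abs{x}$, and the retained constraint $y_1\geq\omega=\abs{x\cos2\theta+z\sin2\theta}$ forces the boundary $y_1=\omega$ precisely when $\hat{y}_1\leq\omega$, which is the second constraint of~\eqref{eq:final}; there the value is $f(\abs{x\cos2\theta+z\sin2\theta})$.

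The main obstacle is justifying the restriction to the region $\hat{y}_1\leq\omega$. In its complement the minimum over $y_1$ sits at $\hat{y}_1$, where $f$ collapses to the linear form $(1-\delta)-2\epsilon z+2\sqrt{\epsilon(1-\epsilon)}\,\abs{x}$; I would argue that minimizing this linear functional over the remaining convex constraints drives the optimum onto the shared boundary $\hat{y}_1=\omega$, on which the two objective expressions coincide, so that no minimizer is lost by imposing $\hat{y}_1\leq\omega$. Carrying out this boundary analysis—tracking the $(z,x)$ feasible set and the sign of the linear gradient along it—is the delicate bookkeeping, and is what the lengthy Appendix computation is expected to settle.
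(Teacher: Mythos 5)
Your skeleton is essentially the paper's own: set $\omega$ to its lower bound, normalize $\bra{\psi}\Omega\ket{\psi}=1-\delta$, eliminate $y_2$ by a Schur complement, minimize the convex $f$ over $y_1$ at the stationary point $\hat y_1=(1-\delta-2z)+\sqrt{(1-\epsilon)/\epsilon}\,\abs{x}$, and push the minimizing $(z,x)$ into the region $\hat y_1\leq\omega$ by monotonicity in $\abs{x}$. The concrete gap is in the elimination of $y_2$: the dual also carries $y_2\geq 0$, so the minimal feasible $y_2$ is $\max\{0,\;a-y_1+x^2/(y_1-b)\}$ rather than $a-y_1+x^2/(y_1-b)$, and the second branch is active precisely when $y_1\geq t+\sqrt{x^2+z^2}$. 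This breaks your description of the complement of region \textbf{I}: when $\hat y_1>t+\sqrt{x^2+z^2}$, the quantity $a-\hat y_1+x^2/(\hat y_1-b)$ is negative, so the $y_2$ you substitute is infeasible and the minimum over $(y_1,y_2)$ does \emph{not} sit at $\hat y_1$ with the linear value $(1-\delta)-2\epsilon z+2\sqrt{\epsilon(1-\epsilon)}\abs{x}$; it sits at $y_2=0$, $y_1=t+\sqrt{x^2+z^2}$, with value $1-\delta-z+\sqrt{x^2+z^2}$. The paper's Appendix handles this as a separate region \textbf{III} (its Lemma~\ref{lem:inner_noncommuting} has the three-way case split of Eq.~\eqref{eq:y1opt_cases}), shows that this value is also increasing in $\abs{x}$ at fixed $z$, and checks that the objectives agree on the common boundaries, which is what licenses discarding the whole complement of region \textbf{I}. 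Your boundary analysis, as sketched, would miss these points entirely.

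A smaller discrepancy: you enforce $t=1-\delta-z$ by decreasing $t$ at fixed $(z,x)$, but that descent stops at $t=\sqrt{x^2+z^2}$ whenever $\sqrt{x^2+z^2}>1-\delta-z$; such feasible points of the original program are not represented in \eqref{eq:final} and are not dominated by your argument. The paper instead rescales $\bar\Omega_a\mapsto\frac{1-\delta}{\bra{\psi}\bar\Omega_a\ket{\psi}}\bar\Omega_a$, which shrinks $(t,z,x,\omega)$ jointly, remains a valid effect (a sub-unit positive multiple), keeps the dual block positive since the rescaled $\Omega$ is smaller, and lands exactly on $t+z=1-\delta$; you should adopt that rescaling or otherwise dominate the stuck points. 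With those two repairs your proof would coincide with the paper's.
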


\section{Results}
\label{sec:results}

For our results, we keep focusing on the case of two qubits, although we recall that the SDP \eqref{eq:main_opt_sep_sdp} is valid in general and one could therefore set out to solve it in any other case.

\subsection{Commuting strategy}

As we mentioned earlier, when $\delta=0$, which is the case considered in~\cite{qv,qv_2qubit}, the condition $\bra{\psi}\Omega\ket{\psi}=1$ immediately implies that $\Omega$ is diagonal in the same basis as $\ket{\psi}\bra{\psi}$, that is $x=0$ in our notation. When $\delta\neq 0$, there is no \textit{a priori} guarantee that the optimal solution will be a commuting one; but we can obtain an upper bound $p^c_{10}(\delta,\epsilon)$ by \textit{enforcing} $x=0$. In this case, the optimisation \eqref{eq:final} becomes trivial:
\begin{equation*}
\begin{aligned}
p^c_{10}(\delta,\epsilon)=&\min_{z} z\epsilon \sin2\theta +(1-\epsilon)(1-\delta)\\
&\text{ s. t. } z\geq \frac{1-\delta}{2+\sin2\theta}\,,
\end{aligned}
\end{equation*}
that is
\begin{equation}
p^c_{10}(\delta,\epsilon)=(1-\delta)\left[1-\frac{\epsilon}{1+\sin\theta\cos\theta}\right]\,.
\end{equation} This result could have been derived at an earlier stage than Lemma \ref{lemmafinal} (see Lemma \ref{lemmacomm} in the Appendix). In fact, it can also be derived without any reliance on the SDP formulation, by adapting the steps made in Ref.~\cite{qv_2qubit} to the case $\delta\neq 0$.

\subsection{Analytical solution for $\epsilon=1$}

Next, we present the analytical solution of \eqref{eq:final} for the special case $\epsilon=1$. The optimisation now reads
\begin{equation}\label{eq:finaleps1}
\begin{aligned}
p_{10}(\delta,1)=&\min_{z,x} \abs{x\cos2\theta+z\sin2\theta} \\
&\text{ s. t. } 1-\delta-z-\sqrt{x^2+z^2}\geq 0\\
&\phantom{\text{ s. t. }}1-\delta-z+\sqrt{x^2+z^2}\leq 1\\
&\phantom{\text{ s. t. }} 1-\delta-2z\leq\abs{x\cos2\theta+z\sin2\theta}\,.
\end{aligned}
\end{equation}
where we have spelled out the two matrix constraints in \eqref{eq:final}. Even for this simple case, the study is heavy, though without intrinsic difficulties.

First we notice that for the maximally entangled state ($\cos 2\theta=0$, $\sin2\theta=1$) the figure of merit is simply $z$, and the last constraint is $z\geq \frac{1-\delta}{3}$. The two quadratic constraints are both feasible for $z= \frac{1-\delta}{3}$, for a variety of values of $x$ including $x=0$. Thus, for $\theta=\frac{\pi}{4}$ we find $p_{10}(\delta,1)=p^c_{10}(\delta,1)=\frac{1-\delta}{3}$; both the commuting strategy and several non-commuting ones achieve this bound.

For $\cos 2\theta<1$, the solution is unique and can be inferred by studying the feasible region and the figure of merit graphically in the $(x,z)$ plane (Appendix \ref{appb}). The end result is:
\begin{equation}\label{eq:soleps1}
p_{10}(\delta,1)=p^c_{10}(\delta,1)\,+\,x^*\,\frac{2\cos 2\theta}{2+\sin 2\theta}
\end{equation} where $x^*=\max{(x_0,x_1)}$ is the optimal value of $x$ determined by
\begin{equation}\label{x0x1}
\begin{aligned} x_0&=(1-\delta)\,\left(\frac{\cos 2\theta-\sqrt{1+2\sin 2\theta}}{2+\sin 2\theta}\right)\,,\\
x_1&=-\frac{\delta\cos 2\theta+\sqrt{\delta^2(1+2\sin 2\theta)+2\delta(2+\sin 2\theta)}}{2+\sin 2\theta}\,.
\end{aligned}
\end{equation} Since both $x_0$ and $x_1$ are non-positive, $p_{10}(\delta,1)\leq p^c_{10}(\delta,1)$ as expected. Notice that \eqref{eq:soleps1} captures also the case $\theta=\frac{\pi}{4}$ (only, $x^*$ is not unique in that case). Besides, $x_0=0$ holds only for $\cos 2\theta=1$ i.e. for the product state; and $x_1=0$ holds only for $\delta=0$. In summary, for $\epsilon=1$, $0<\delta<1$, and $0<\theta<\frac{\pi}{4}$, the optimal strategy is \textit{not} the commuting one. 

\begin{figure}
  \includegraphics[width=\columnwidth]{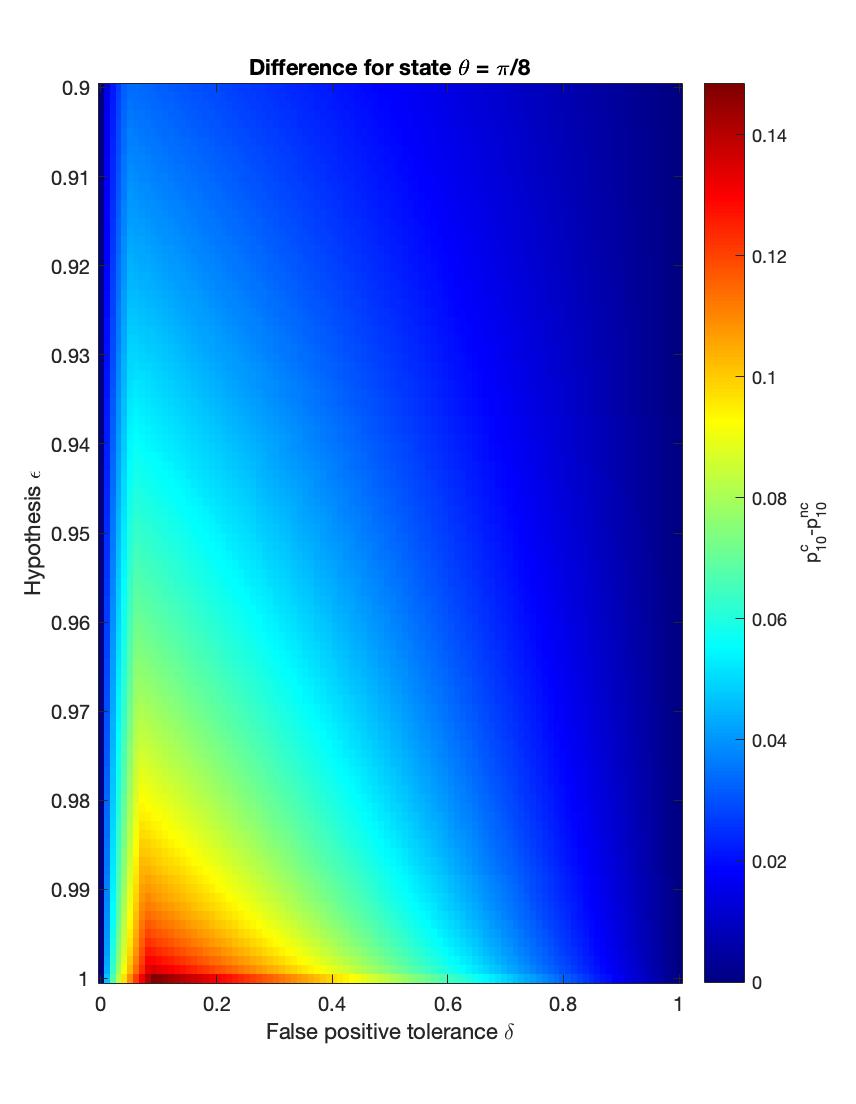}
  \includegraphics[width=\columnwidth]{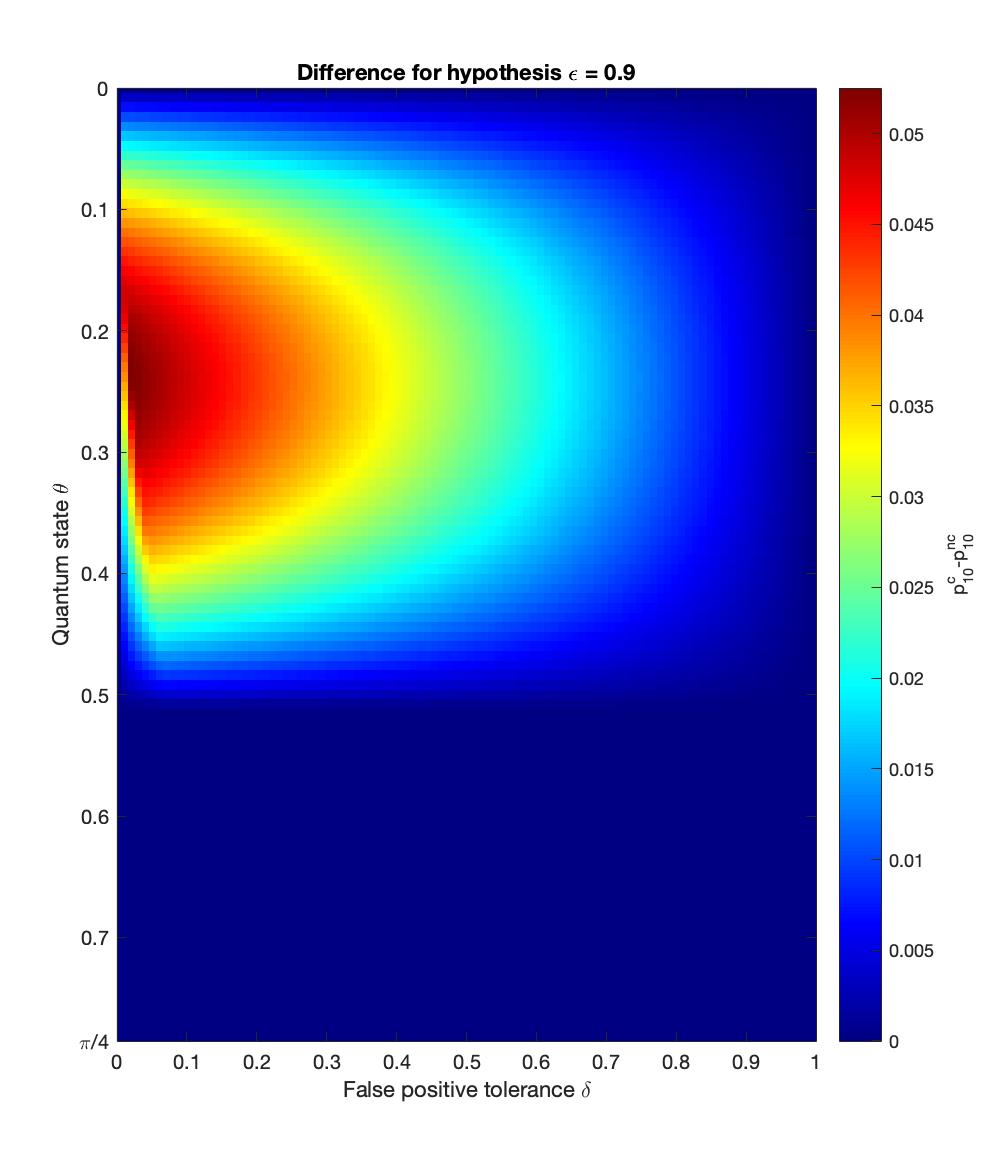}
  \caption{\label{fig:numerics}Difference $p^c_{10}(\delta,\epsilon)-p_{10}(\delta,\epsilon)$ between worst-case type II error probability for commuting and non-commuting strategies. Upper panel: in the plane $(\delta,\epsilon)$ for fixed state $\theta=\pi/8$. For values of $\epsilon\in[0,0.9]$ the difference is negligible and has been omitted from the plot. Lower panel: in the plane $(\delta,\theta)$ for fixed $\epsilon=0.9$.}
\end{figure}

Since we have set $\epsilon=1$, which means that $\sigma$ can be orthogonal to $\ket{\psi}$, it is natural to check also when $p_{10}(\delta,1)=0$, which would be obviously the case if we had not added the constraint that $\Omega$ must be separable. By inspection, we see that this is the case only for $\theta=0$, i.e. when the state itself is product (in which case it is trivial: one can find an orthogonal product state and check the orthogonality locally).

\subsection{Numerical solutions of the SDP}

We have seen that, even in the case $\epsilon=1$ when the figure of merit is at its simplest, the analytical solution requires some work and yields a not-so-transparent result. In view of this,  for arbitrary values of $\epsilon$ we leave aside any attempt of solving the optimisation \eqref{eq:final} analytically, and resort rather to numerical solutions of the SDP \eqref{eq:main_opt_sep_sdp}.

The results are presented in Fig.~\ref{fig:numerics}. We see that, in a large portion of parameter space, a commuting strategy is very close to being optimal (if not exactly so). A significant difference is seen only for $\epsilon\gtrsim 0.8$, that is, when the state $\sigma$ is allowed to be almost orthogonal to $\ket{\psi}$.

\section{\label{sec:conclusion}Conclusions}

We have worked in the quantum hypothesis testing scenario that has been called “quantum state verification”, in which the the null hypothesis is a pure state $\ket{\psi}$, while the alternative hypothesis may be any state $\sigma$ that is “distinguishable enough” from $\ket{\psi}$ (quantified by $\langle \psi|\sigma|\psi\rangle \leq 1-\epsilon$). Like previous works, we focused on entangled states shared by two distant players, and studied hypothesis testing under separable operations. We studied the tradeoff between the probability of false negative and that of false positive (the latter had been set to zero in previous studies, which amounts at assuming that the optimal POVM for the discrimination is implemented perfectly). The bilinear nature of the resulting optimization is overcome by reformulating the problem as a SDP. Then we presented the detailed solution for the case of two qubits, including analytical results for some extreme cases. We showed that, in general, the solution is a non-trivial modification of previous constructions: in particular, the optimal POVM may not commute with the closest state $\sigma$.

\acknowledgements{We thank Masahito Hayashi for discussions about our modifications of the hypotheses, and more generally the connection between quantum verification and hypothesis testing. This work is supported by the National Research Foundation and the Ministry of Education, Singapore, under the Research Centres of Excellence programme. L.P.T acknowledges support from the Alexander von Humboldt Foundation. M. D. acknowledges support from MEXT Quantum Leap Flagship Program (MEXT Q-LEAP) Grant No. JPMXS0118067285, JSPS KAKENHI Grant Number JP20K03774, and the International Research Unit of Quantum Information, Kyoto University.}

\bibliographystyle{plainnat}
\bibliography{bibliography}

\begin{thebibliography}{27}
\providecommand{\natexlab}[1]{#1}
\providecommand{\url}[1]{\texttt{#1}}
\expandafter\ifx\csname urlstyle\endcsname\relax
  \providecommand{\doi}[1]{doi: #1}\else
  \providecommand{\doi}{doi: \begingroup \urlstyle{rm}\Url}\fi

\bibitem[A(1982)]{Holevo_book}
Holevo A.
\newblock \emph{Probabilistic and Statistical Aspects of Quantum Theory}.
\newblock North-Holland, Amsterdam, 1st edition, 1982.
\newblock \doi{10.1007/978-88-7642-378-9}.
\newblock URL \url{https://doi.org/10.1007/978-88-7642-378-9}.

\bibitem[Ac\'{\i}n(2001)]{discriminate_unitary}
A.~Ac\'{\i}n.
\newblock Statistical distinguishability between unitary operations.
\newblock \emph{Phys. Rev. Lett.}, 87:\penalty0 177901, Oct 2001.
\newblock \doi{10.1103/PhysRevLett.87.177901}.
\newblock URL \url{https://link.aps.org/doi/10.1103/PhysRevLett.87.177901}.

\bibitem[Bae and Kwek(2015)]{quantumstatediscrimination}
Joonwoo Bae and Leong-Chuan Kwek.
\newblock Quantum state discrimination and its applications.
\newblock \emph{Journal of Physics A: Mathematical and Theoretical},
  48\penalty0 (8):\penalty0 083001, jan 2015.
\newblock \doi{10.1088/1751-8113/48/8/083001}.
\newblock URL \url{https://doi.org/10.1088%2F1751-8113%2F48%2F8%2F083001}.

\bibitem[Bisio et~al.(2011)Bisio, Dall'Arno, and D'Ariano]{BDD11}
Alessandro Bisio, Michele Dall'Arno, and Giacomo~Mauro D'Ariano.
\newblock Tradeoff between energy and error in the discrimination of
  quantum-optical devices.
\newblock \emph{Phys. Rev. A}, 84:\penalty0 012310, 2011.
\newblock \doi{10.1103/PhysRevA.84.012310}.

\bibitem[Boyd and Vandenberghe(2004)]{boyd2004convex}
Stephen Boyd and Lieven Vandenberghe.
\newblock \emph{Convex optimization}.
\newblock Cambridge university press, 2004.
\newblock \doi{10.1017/CBO9780511804441}.

\bibitem[Calsamiglia et~al.(2010)Calsamiglia, de~Vicente, Mu\~noz Tapia, and
  Bagan]{discriminate_mixedstate}
J.~Calsamiglia, J.~I. de~Vicente, R.~Mu\~noz Tapia, and E.~Bagan.
\newblock Local discrimination of mixed states.
\newblock \emph{Phys. Rev. Lett.}, 105:\penalty0 080504, Aug 2010.
\newblock \doi{10.1103/PhysRevLett.105.080504}.
\newblock URL \url{https://link.aps.org/doi/10.1103/PhysRevLett.105.080504}.

\bibitem[Childs et~al.(2000)Childs, Preskill, and Renes]{discriminate_channel}
Andrew~M. Childs, John Preskill, and Joseph Renes.
\newblock Quantum information and precision measurement.
\newblock \emph{Journal of Modern Optics}, 47\penalty0 (2-3):\penalty0
  155--176, 2000.
\newblock \doi{10.1080/09500340008244034}.
\newblock URL
  \url{https://www.tandfonline.com/doi/abs/10.1080/09500340008244034}.

\bibitem[Croke et~al.(2006)Croke, Andersson, Barnett, Gilson, and
  Jeffers]{maxconfidence}
Sarah Croke, Erika Andersson, Stephen~M. Barnett, Claire~R. Gilson, and John
  Jeffers.
\newblock Maximum confidence quantum measurements.
\newblock \emph{Phys. Rev. Lett.}, 96:\penalty0 070401, Feb 2006.
\newblock \doi{10.1103/PhysRevLett.96.070401}.
\newblock URL \url{https://link.aps.org/doi/10.1103/PhysRevLett.96.070401}.

\bibitem[Dall'Arno et~al.(2012)Dall'Arno, Bisio, D'Ariano, Mikova, Jezek, and
  Dusek]{DBDMJD12}
Michele Dall'Arno, Alessandro Bisio, Giacomo~Mauro D'Ariano, Martina Mikova,
  Miroslav Jezek, and Miloslav Dusek.
\newblock Experimental implementation of unambiguous quantum reading.
\newblock \emph{Phys. Rev. A}, 85:\penalty0 012308, 2012.
\newblock \doi{10.1103/PhysRevA.85.012308}.

\bibitem[Doherty et~al.(2002)Doherty, Parrilo, and Spedalieri]{DPS_hierarchy}
A.~C. Doherty, Pablo~A. Parrilo, and Federico~M. Spedalieri.
\newblock Distinguishing separable and entangled states.
\newblock \emph{Phys. Rev. Lett.}, 88:\penalty0 187904, Apr 2002.
\newblock \doi{10.1103/PhysRevLett.88.187904}.
\newblock URL \url{https://link.aps.org/doi/10.1103/PhysRevLett.88.187904}.

\bibitem[Harrow et~al.(2017)Harrow, Natarajan, and Wu]{separable_SDP}
Aram~W. Harrow, Anand Natarajan, and Xiaodi Wu.
\newblock An improved semidefinite programming hierarchy for testing
  entanglement.
\newblock \emph{Communications in Mathematical Physics}, 352\penalty0
  (3):\penalty0 881--904, Jun 2017.
\newblock ISSN 1432-0916.
\newblock \doi{10.1007/s00220-017-2859-0}.
\newblock URL \url{https://doi.org/10.1007/s00220-017-2859-0}.

\bibitem[Hayashi et~al.(2008)Hayashi, Hashimoto, and
  Horibe]{discriminate_errormargin}
A.~Hayashi, T.~Hashimoto, and M.~Horibe.
\newblock State discrimination with error margin and its locality.
\newblock \emph{Phys. Rev. A}, 78:\penalty0 012333, Jul 2008.
\newblock \doi{10.1103/PhysRevA.78.012333}.
\newblock URL \url{https://link.aps.org/doi/10.1103/PhysRevA.78.012333}.

\bibitem[{Hayashi} and {Nagaoka}(2003)]{channelcoding}
M.~{Hayashi} and H.~{Nagaoka}.
\newblock General formulas for capacity of classical-quantum channels.
\newblock \emph{IEEE Transactions on Information Theory}, 49\penalty0
  (7):\penalty0 1753--1768, July 2003.
\newblock \doi{10.1109/TIT.2003.813556}.

\bibitem[Hayashi(2009)]{hayashi_2009}
Masahito Hayashi.
\newblock Group theoretical study of {LOCC}-detection of maximally entangled
  states using hypothesis testing.
\newblock \emph{New Journal of Physics}, 11\penalty0 (4):\penalty0 043028, apr
  2009.
\newblock \doi{10.1088/1367-2630/11/4/043028}.
\newblock URL \url{https://doi.org/10.1088%2F1367-2630%2F11%2F4%2F043028}.

\bibitem[Hayashi et~al.(2006)Hayashi, Matsumoto, and Tsuda]{hayashi_2006}
Masahito Hayashi, Keiji Matsumoto, and Yoshiyuki Tsuda.
\newblock A study of {LOCC}-detection of a maximally entangled state using
  hypothesis testing.
\newblock \emph{Journal of Physics A: Mathematical and General}, 39\penalty0
  (46):\penalty0 14427--14446, nov 2006.
\newblock \doi{10.1088/0305-4470/39/46/013}.
\newblock URL \url{https://doi.org/10.1088%2F0305-4470%2F39%2F46%2F013}.

\bibitem[Helstrom(1969)]{helstrom1969quantum}
Carl~W Helstrom.
\newblock Quantum detection and estimation theory.
\newblock \emph{Journal of Statistical Physics}, 1\penalty0 (2):\penalty0
  231--252, 1969.
\newblock \doi{10.1007/BF01007479}.

\bibitem[Hradil(1997)]{Hradil97}
Z.~Hradil.
\newblock Quantum-state estimation.
\newblock \emph{Phys. Rev. A}, 55:\penalty0 R1561--R1564, Mar 1997.
\newblock \doi{10.1103/PhysRevA.55.R1561}.
\newblock URL \url{https://link.aps.org/doi/10.1103/PhysRevA.55.R1561}.

\bibitem[Liu et~al.(2019)Liu, Yu, Shang, Zhu, and Zhang]{qv_dicke}
Ye-Chao Liu, Xiao-Dong Yu, Jiangwei Shang, Huangjun Zhu, and Xiangdong Zhang.
\newblock Efficient verification of dicke states.
\newblock \emph{Phys. Rev. Applied}, 12:\penalty0 044020, Oct 2019.
\newblock \doi{10.1103/PhysRevApplied.12.044020}.
\newblock URL \url{https://link.aps.org/doi/10.1103/PhysRevApplied.12.044020}.

\bibitem[Lloyd(2008)]{qillumination}
Seth Lloyd.
\newblock Enhanced sensitivity of photodetection via quantum illumination.
\newblock \emph{Science}, 321\penalty0 (5895):\penalty0 1463--1465, 2008.
\newblock ISSN 0036-8075.
\newblock \doi{10.1126/science.1160627}.
\newblock URL \url{https://science.sciencemag.org/content/321/5895/1463}.

\bibitem[Pallister et~al.(2018)Pallister, Linden, and Montanaro]{qv}
Sam Pallister, Noah Linden, and Ashley Montanaro.
\newblock Optimal verification of entangled states with local measurements.
\newblock \emph{Phys. Rev. Lett.}, 120:\penalty0 170502, Apr 2018.
\newblock \doi{10.1103/PhysRevLett.120.170502}.
\newblock URL \url{https://link.aps.org/doi/10.1103/PhysRevLett.120.170502}.

\bibitem[Paris and Rehacek(2010)]{Paris_book}
Matteo Paris and Jaroslav Rehacek.
\newblock \emph{Quantum State Estimation}.
\newblock Springer Publishing Company, Incorporated, 1st edition, 2010.
\newblock ISBN 3642061036, 9783642061035.
\newblock \doi{10.1007/b98673}.

\bibitem[Pirandola(2011)]{quantum_reading}
Stefano Pirandola.
\newblock Quantum reading of a classical digital memory.
\newblock \emph{Phys. Rev. Lett.}, 106:\penalty0 090504, Mar 2011.
\newblock \doi{10.1103/PhysRevLett.106.090504}.
\newblock URL \url{https://link.aps.org/doi/10.1103/PhysRevLett.106.090504}.

\bibitem[Takeuchi and Morimae(2018)]{qv_noniid}
Yuki Takeuchi and Tomoyuki Morimae.
\newblock Verification of many-qubit states.
\newblock \emph{Phys. Rev. X}, 8:\penalty0 021060, Jun 2018.
\newblock \doi{10.1103/PhysRevX.8.021060}.
\newblock URL \url{https://link.aps.org/doi/10.1103/PhysRevX.8.021060}.

\bibitem[Wang and Hayashi(2019)]{qv_2qubit}
Kun Wang and Masahito Hayashi.
\newblock Optimal verification of two-qubit pure states.
\newblock \emph{Phys. Rev. A}, 100:\penalty0 032315, Sep 2019.
\newblock \doi{10.1103/PhysRevA.100.032315}.
\newblock URL \url{https://link.aps.org/doi/10.1103/PhysRevA.100.032315}.

\bibitem[Yu et~al.(2019)Yu, Shang, and G{\"u}hne]{qv_bipartite}
Xiao-Dong Yu, Jiangwei Shang, and Otfried G{\"u}hne.
\newblock Optimal verification of general bipartite pure states.
\newblock \emph{npj Quantum Information}, 5\penalty0 (1):\penalty0 112, 2019.
\newblock ISSN 2056-6387.
\newblock \doi{10.1038/s41534-019-0226-z}.
\newblock URL \url{https://doi.org/10.1038/s41534-019-0226-z}.

\bibitem[Zhu and Hayashi(2019{\natexlab{a}})]{qv_correspondence}
Huangjun Zhu and Masahito Hayashi.
\newblock Optimal verification and fidelity estimation of maximally entangled
  states.
\newblock \emph{Phys. Rev. A}, 99:\penalty0 052346, May 2019{\natexlab{a}}.
\newblock \doi{10.1103/PhysRevA.99.052346}.
\newblock URL \url{https://link.aps.org/doi/10.1103/PhysRevA.99.052346}.

\bibitem[Zhu and Hayashi(2019{\natexlab{b}})]{qv_pure}
Huangjun Zhu and Masahito Hayashi.
\newblock Efficient verification of hypergraph states.
\newblock \emph{Phys. Rev. Applied}, 12:\penalty0 054047, Nov
  2019{\natexlab{b}}.
\newblock \doi{10.1103/PhysRevApplied.12.054047}.
\newblock URL \url{https://link.aps.org/doi/10.1103/PhysRevApplied.12.054047}.

\end{thebibliography}

\begin{appendix}
\section{Solution of the inner optimisation}
\label{sslengthy}

In this Appendix, we show how to go from Lemma \ref{lemma2} to Lemma \ref{lemmafinal}, while proving a few other intermediate results.

We begin by simplifying the inequality constraints on $t+z$ and $\omega$.
\begin{lem} Without loss of generality, the optimisation \eqref{eq:optlemma2} in Lemma  \ref{lemma2} becomes
\begin{equation}
\begin{aligned}
p_{10}(\delta,\epsilon)=&\min_{t,z,x,y_1,y_2} y_1+(1-\epsilon)y_2 \\
&\text{ s. t. } 0\preceq\begin{pmatrix}
t+z & x\\
x & t-z
\end{pmatrix}
\preceq\openone\\
&\phantom{\text{ s. t. }} t+z = 1-\delta\\
&\phantom{\text{ s. t. }} \begin{pmatrix}
y_1+y_2-(t+z) & -x\\
-x & y_1-(t-z)
\end{pmatrix}\succeq0 \\
&\phantom{\text{ s. t. }} y_1\geq\omega:=\abs{x\cos2\theta+z\sin2\theta} \\
&\phantom{\text{ s. t. }}  y_1\in\mathbb{R}, y_2\geq0 \\
\end{aligned}
\end{equation}
\end{lem}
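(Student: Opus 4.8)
The plan is to observe that, once the symmetrised effect $\bar\Omega_a$ (equivalently the real parameters $t,z,x,\omega$) is fixed, the leftover minimisation over $y_1,y_2$ is governed by a single Löwner inequality that is monotone in $\bar\Omega_a$. Indeed, the dual matrix constraint together with $y_1\geq\omega$ is exactly $y_1\openone+y_2\ketbra{\psi}\succeq\bar\Omega_a$, once both sides are read in the two invariant blocks $\{\ket{\psi},\ket{\psi^\perp}\}$ and $\{\ket{01},\ket{10}\}$ on which $\bar\Omega_a$ is block-diagonal. The key remark I would record first is the resulting monotonicity: if $\bar\Omega_a'\preceq\bar\Omega_a$ are both of the admissible (symmetric, PPT) form, then any $(y_1,y_2)$ feasible for $\bar\Omega_a$ is feasible for $\bar\Omega_a'$, so the inner optimal value is nondecreasing in $\bar\Omega_a$ with respect to the Löwner order. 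This single fact drives both claimed simplifications.

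First I would eliminate $\omega$. It does not enter the objective and appears only through $0\leq\omega\leq1$, $\omega\geq\abs{x\cos2\theta+z\sin2\theta}$ and $y_1\geq\omega$. Replacing $\omega$ by $\abs{x\cos2\theta+z\sin2\theta}$ in any feasible point keeps all constraints satisfied and leaves the objective unchanged: the only point to verify is $\abs{x\cos2\theta+z\sin2\theta}\leq1$. This follows from the matrix constraints, since the determinant condition for $\left(\begin{smallmatrix}t+z&x\\x&t-z\end{smallmatrix}\right)\succeq0$ gives $x^2+z^2\leq t^2$, while $\left(\begin{smallmatrix}t+z&x\\x&t-z\end{smallmatrix}\right)\preceq\openone$ forces $0\leq t\leq1$; hence by Cauchy--Schwarz $\abs{x\cos2\theta+z\sin2\theta}\leq\sqrt{x^2+z^2}\leq t\leq1$. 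The bounds $0\leq\omega\leq1$ thus become redundant and are dropped, and the pair of constraints collapses to $y_1\geq\omega:=\abs{x\cos2\theta+z\sin2\theta}$, with $\omega$ now a defined quantity rather than a variable.

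Second I would tighten $t+z\geq1-\delta$ to equality by rescaling. For $\delta<1$, take any feasible point with $t+z>1-\delta$ and replace $\bar\Omega_a$ by $c\,\bar\Omega_a$ with $c=(1-\delta)/(t+z)\in(0,1)$, i.e.\ $(t,z,x,\omega)\mapsto c\,(t,z,x,\omega)$. Multiplication by a positive scalar preserves positive semidefiniteness, the ordering $c\bar\Omega_a\preceq\bar\Omega_a\preceq\openone$, and the homogeneous PPT constraint $\omega\geq\abs{x\cos2\theta+z\sin2\theta}$, while it makes $\bra{\psi}(c\bar\Omega_a)\ket{\psi}=c(t+z)=1-\delta$. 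By the monotonicity remark the inner optimum cannot increase, so the rescaled point is again optimal; hence $t+z=1-\delta$ may be imposed without loss of generality. The reverse inequality is automatic, since the reduced problem is a restriction of the original, so the two optimal values coincide, giving the stated form (the degenerate case $\delta=1$ follows by continuity, or directly with $c=0$).

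The step I expect to be the main obstacle is the rescaling: one has to check that a single positive scaling preserves simultaneously positivity, the upper bound $\preceq\openone$, and separability encoded as PPT, and that it can only help the objective. The monotonicity remark is precisely what makes the last point clean; without it one would be forced to re-minimise over $y_1,y_2$ after rescaling, which is exactly the explicit computation deferred to the subsequent lemmas that produce the function $f(y_1^*)$.
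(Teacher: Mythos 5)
Your proof is correct and follows essentially the same route as the paper's: you push $\omega$ down to its lower bound $\abs{x\cos2\theta+z\sin2\theta}$ (supplying the Cauchy--Schwarz detail for $\abs{x\cos2\theta+z\sin2\theta}\le 1$ that the paper leaves implicit) and rescale $\bar\Omega_a$ by $(1-\delta)/(t+z)$ to saturate the fidelity constraint, with your L\"owner-monotonicity remark being just a repackaging of the paper's observation that the same $(y_1,y_2)$ stays feasible for the rescaled effect. No gaps.
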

\begin{proof}
With the notation introduced in the proof of Lemma \ref{lemma2}, for any feasible $(\bar{\Omega}_a,y_1,y_2)$ with $\bra{\psi}\bar{\Omega}_a\ket{\psi}>1-\delta\geq0$ for $\delta\in[0,1]$ there is another feasible $(\bar{\Omega}_a',y_1,y_2)$, where 
$$\bar{\Omega}_a' = \frac{(1-\delta)}{\bra{\psi}\bar{\Omega}_a\ket{\psi}}\bar{\Omega}_a$$
ensures $\bra{\psi}\bar{\Omega}_a'\ket{\psi}=1-\delta$, achieving the same objective value, we can without loss of generality assume that $t+z=1-\delta$.

It is clear that by reducing $\omega$, we increase the size of the feasible region of the inner optimisation over variables $y_1,y_2$. Therefore since $\abs{x\cos2\theta+z\sin2\theta}\leq1$ follows from other constraints, we have that $0\leq\abs{x\cos2\theta+z\sin2\theta}\leq\omega\leq1$, which means it suffices to take $\omega$ equal the lower bound.
\end{proof}

We now solve the inner optimisation, that is the optimisation over $y_1$ and $y_2$. It is natural to split our consideration into commuting strategy $x=0$ and non-commuting strategy $x\neq0$, as the commuting case is a simpler linear programming problem.
\begin{lem}\label{lem:inner_commuting}
For the commuting strategy $x=0$, the solution of the inner optimisation is
\begin{equation}
y_1^* + (1-\epsilon)\max\{0,t+z-y_1^*\}
\end{equation}
where $y_1^*:=\max\{t-z,\abs{z\sin2\theta}\}$.
\end{lem}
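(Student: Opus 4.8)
The plan is to exploit the fact that setting $x=0$ decouples the only genuine matrix inequality in the problem into two scalar ones, collapsing the inner optimisation into a two-variable linear program that can be solved by hand. Concretely, with $x=0$ the positivity constraint
$$\begin{pmatrix} y_1+y_2-(t+z) & -x \\ -x & y_1-(t-z) \end{pmatrix}\succeq0$$
becomes diagonal, hence equivalent to the pair $y_1+y_2\geq t+z$ and $y_1\geq t-z$; together with $y_1\geq\abs{z\sin2\theta}$ (the $\omega$ constraint with $x=0$) and $y_2\geq0$, these are the only restrictions on $(y_1,y_2)$ once $t,z$ are fixed. Note that the remaining constraint $0\preceq\bigl(\begin{smallmatrix}t+z & 0\\ 0 & t-z\end{smallmatrix}\bigr)\preceq\openone$ guarantees $t-z\geq0$, so the lower bounds on $y_1$ are non-negative.

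First I would eliminate $y_2$. Since $\epsilon\in(0,1]$ the coefficient $1-\epsilon$ multiplying $y_2$ in the objective is non-negative, so for any admissible $y_1$ the objective is non-decreasing in $y_2$, and one should take $y_2$ as small as the constraints permit, namely $y_2=\max\{0,\,(t+z)-y_1\}$. Substituting this value reduces the problem to minimising the single-variable function
$$g(y_1)=y_1+(1-\epsilon)\max\{0,\,(t+z)-y_1\}$$
over the half-line $y_1\geq y_1^*$, where $y_1^*:=\max\{t-z,\abs{z\sin2\theta}\}$ is precisely the tightest lower bound imposed by the remaining two constraints.

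The final step is to observe that $g$ is monotone increasing, so its minimum over $y_1\geq y_1^*$ is attained at the left endpoint $y_1=y_1^*$, which yields exactly the claimed value $y_1^*+(1-\epsilon)\max\{0,\,t+z-y_1^*\}$. To see the monotonicity, note that $g$ is continuous and piecewise linear: on $y_1\geq t+z$ it equals $y_1$ (slope $1$), while on $y_1<t+z$ it equals $\epsilon y_1+(1-\epsilon)(t+z)$ (slope $\epsilon$), and both slopes are strictly positive because $\epsilon>0$. I expect the only point demanding care to be this monotonicity argument, together with checking that the formula uniformly absorbs the two regimes $y_1^*\geq t+z$ (where the $\max$ vanishes and the answer reduces to $y_1^*$) and $y_1^*<t+z$; since the $\max$ handles both cases automatically, no real obstacle remains and the rest is a routine linear-programming verification.
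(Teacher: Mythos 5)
Your proposal is correct and follows essentially the same route as the paper: with $x=0$ the positive-semidefiniteness constraint reduces to the two diagonal inequalities (the paper gets there via Sylvester's criterion and notes the quadratic condition becomes vacuous), after which $y_2$ and then $y_1$ are pushed to their lower bounds. Your explicit verification that $g(y_1)$ is piecewise linear with positive slopes is a slightly more careful justification of the final step, which the paper simply asserts, but the argument is the same.
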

\begin{proof}
By Sylvester's criterion for psd, the inner minimization becomes
\begin{align*}
&\min_{y_1,y_2} y_1+(1-\epsilon)y_2 \\
&\text{ s. t. } y_1+y_2-(t+z)\geq0 \\
&\phantom{\text{ s. t. }} y_1-(t-z)\geq0 \\
&\phantom{\text{ s. t. }} (y_1+y_2-(t+z))(y_1-(t-z))\geq x^2 \\
&\phantom{\text{ s. t. }}  y_1\geq\abs{x\cos2\theta+z\sin2\theta}, y_2\geq0
\end{align*}
When $x=0$ the quadratic constraint trivially follows from the inequality constraints so we can drop it and the optimisation is linear. The constraints are
\begin{align*}
y_1&\geq\max\{t-z,\abs{x\cos2\theta+z\sin2\theta}\} \\
y_2&\geq\max\{0,(t+z)-y_1\}
\end{align*}
so that the minimum is reached at the lower bounds.
\end{proof}

\begin{lem}\label{lemmacomm}
For the commuting strategy $x=0$, the optimal error probability is given by
\begin{equation}\label{commapp}
p_{10}(\delta,\epsilon)=(1-\delta)\left[1-\frac{\epsilon}{1+\sin\theta\cos\theta}\right]
\end{equation}
\end{lem}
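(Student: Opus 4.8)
The plan is to start from Lemma~\ref{lem:inner_commuting}, which already supplies the value of the inner optimisation for the commuting strategy $x=0$, and to minimise the resulting one-parameter objective over $z$. First I would invoke the preceding lemma to fix $t+z=1-\delta$ without loss of generality, so that $t-z=1-\delta-2z$ and the quantity of Lemma~\ref{lem:inner_commuting} reads $y_1^*=\max\{1-\delta-2z,\,|z\sin2\theta|\}$. The objective to be minimised over $z$ then becomes the piecewise-defined function $g(z):=y_1^*+(1-\epsilon)\max\{0,\,1-\delta-y_1^*\}$, subject to the residual feasibility constraint $0\le 1-\delta-2z\le 1$ coming from the matrix condition $0\preceq\left(\begin{smallmatrix} t+z & 0\\ 0 & t-z\end{smallmatrix}\right)\preceq\openone$.

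Next I would carry out the case analysis dictated by the two nested maxima. Restricting to $z\ge 0$ (negative $z$ only enlarges $1-\delta-2z$ and hence $y_1^*$, worsening the objective), the inner $\max$ switches branch at $z_\star:=\frac{1-\delta}{2+\sin2\theta}$, where $1-\delta-2z=z\sin2\theta$. For $z\le z_\star$ one has $y_1^*=1-\delta-2z$ and, since $1-\delta-y_1^*=2z\ge0$, the objective simplifies to $g(z)=1-\delta-2\epsilon z$, which is \emph{decreasing} in $z$. For $z\ge z_\star$ one has $y_1^*=z\sin2\theta$ and, after checking that $1-\delta-z\sin2\theta\ge0$ throughout the feasible range (which follows from $z\le (1-\delta)/2$ and $\sin2\theta\le 1$), the objective becomes $g(z)=(1-\epsilon)(1-\delta)+\epsilon z\sin2\theta$, which is \emph{increasing} in $z$. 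Hence $g$ is minimised exactly at the junction $z=z_\star$, recovering the reduced linear program quoted in the main text.

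Finally I would substitute $z_\star$ and simplify. At $z_\star$ one finds $y_1^*=\frac{(1-\delta)\sin2\theta}{2+\sin2\theta}$ and $1-\delta-y_1^*=\frac{2(1-\delta)}{2+\sin2\theta}$, so that
\begin{equation*}
g(z_\star)=\frac{1-\delta}{2+\sin2\theta}\bigl[\sin2\theta+2(1-\epsilon)\bigr]=(1-\delta)\left[1-\frac{2\epsilon}{2+\sin2\theta}\right].
\end{equation*}
Using $\sin2\theta=2\sin\theta\cos\theta$, whence $2+\sin2\theta=2(1+\sin\theta\cos\theta)$, this collapses to the claimed expression~\eqref{commapp}.

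I expect the only genuinely delicate step to be the branch analysis of the doubly piecewise objective: one must verify that the two one-sided slopes both point toward the junction $z_\star$ (so that no interior or boundary minimum is missed) and that the semidefinite feasibility window $0\le 1-\delta-2z\le 1$ is not binding before $z_\star$ is reached. Both checks are elementary — $z_\star\ge 0$ and $1-\delta-2z_\star=(1-\delta)\frac{\sin2\theta}{2+\sin2\theta}\ge 0$ guarantee feasibility — but they are precisely what turns the minimisation into a rigorous argument rather than a bare substitution.
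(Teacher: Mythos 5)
Your proposal is correct and follows essentially the same route as the paper: it starts from Lemma~\ref{lem:inner_commuting} together with the normalisation $t+z=1-\delta$, reduces to a one-dimensional piecewise-linear minimisation over $z$, and locates the optimum at the kink $z_\star=\frac{1-\delta}{2+\sin2\theta}$ where $1-\delta-2z=\abs{z\sin2\theta}$. The only difference is organisational — you split cases according to which branch of $y_1^*=\max\{1-\delta-2z,\abs{z\sin2\theta}\}$ is active and check the one-sided slopes, whereas the paper splits on $y_1^*\gtrless 1-\delta$ — but both arguments are equivalent and yield the same value.
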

\begin{proof}
Since $x=0$, we are left with the program
\begin{align*}
&\min_{z} y_1^* + (1-\epsilon)\max\{0,t+z-y_1^*\} \\
&\text{ s. t. } 0\leq t-z\leq1\\
&\phantom{\text{ s. t. }} t+z = 1-\delta\\
&\phantom{\text{ s. t. }} y_1^*:=\max\{t-z,\abs{z\sin2\theta}\}
\end{align*}
The objective function can be rewritten as
$$\max\{y_1^*,(1-\epsilon)(1-\delta)+\epsilon y_1^*\}$$
from which we consider two cases. If $y_1^*\geq1-\delta$ then
\begin{align*}
&\min_{z} y_1^* \\
&\text{ s. t. } 0\leq 1-\delta-2z\leq1\\
&\phantom{\text{ s. t. }} y_1^*:=\max\{1-\delta-2z,\abs{z\sin2\theta}\}\geq1-\delta
\end{align*}
Here $y_1^*\geq0$ always, so the minimum is at least $1-\delta$.
If $y_1^*\leq1-\delta$ then
\begin{align*}
&\min_{z} (1-\epsilon)(1-\delta)+\epsilon y_1^* \\
&\text{ s. t. } 0\leq 1-\delta-2z\leq1\\
&\phantom{\text{ s. t. }} y_1^*:=\max\{1-\delta-2z,\abs{z\sin2\theta}\}\leq1-\delta
\end{align*}
It is straightforward to see that the minimum is achieved when
$$0\leq1-\delta-2z=\abs{z\sin2\theta}\leq1-\delta$$
corresponding to an optimal solution
$z^*=\frac{1-\delta}{2+\sin2\theta}$
with optimum value
$$(1-\delta)\left[1-\frac{\epsilon}{1+\sin\theta\cos\theta}\right]\,.$$
Since the global minimum is the smaller value of these two cases, the proof of the Lemma is complete.
\end{proof}

We remark that the structure of the optimal verification operator among all commuting strategies is rather simple. Explicitly we have that
\begin{align}
\Omega^* = \begin{pmatrix}
1-\delta & 0 & 0 & 0\\
0 & \omega^* & 0 & 0\\
0 & 0 & \omega^* & 0\\
0 & 0 & 0 & \omega^*
\end{pmatrix}\,,\quad \omega^*=\frac{(1-\delta)\sin2\theta}{2+\sin2\theta}\,.
\end{align}
This can be seen as a generalization of the optimal commuting strategy that Wang and Hayashi found for $\delta=0$ case~\cite{qv_2qubit} to the $\delta\in[0,1]$ case.

We now consider the noncommuting case, which is no longer a linear optimisation problem.
\begin{lem}\label{lem:inner_noncommuting}
For the noncommuting strategy $x\neq0$, the solution of the inner optimisation is
\begin{equation}
y_1^*+(1-\epsilon)\left[(t+z)-y_1^*+\frac{x^2}{y_1^*-(t-z)}\right]
\end{equation}
with the value
\begin{equation}\label{eq:y1opt_cases}
y_1^* = \begin{cases} \omega \,\mathrm{if}\, \hat{y}_1 \leq \omega \\
\hat{y}_1 \,\mathrm{if}\, \omega < \hat{y}_1 < t+\sqrt{x^2+z^2} \\
t+\sqrt{x^2+z^2} \,\mathrm{if}\, \hat{y}_1 > t+\sqrt{x^2+z^2}
\end{cases}
\end{equation}
for $\omega=\abs{x\cos2\theta+z\sin2\theta}$ and $\hat{y}_1=(t-z)+\sqrt{\frac{1-\epsilon}{\epsilon}}\abs{x}$.
\end{lem}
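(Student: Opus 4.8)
The plan is to carry out the inner minimisation over $(y_1,y_2)$ in two nested steps: first eliminate $y_2$ for fixed $y_1$, then minimise the resulting univariate function of $y_1$. Throughout I would work from the four inequalities appearing in the proof of the previous Lemma, namely $y_1+y_2-(t+z)\geq0$, $y_1-(t-z)\geq0$, the determinant condition $(y_1+y_2-(t+z))(y_1-(t-z))\geq x^2$, and $y_1\geq\omega$, $y_2\geq0$. Since $x\neq0$, the determinant condition forces both of its factors to be strictly positive, so in particular $y_1>t-z$ and one may divide by $y_1-(t-z)$.

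First I would eliminate $y_2$. Because the objective is nondecreasing in $y_2$ (the coefficient $1-\epsilon\geq0$), I push $y_2$ down to the tightest of its lower bounds. The determinant condition rewrites as $y_2\geq (t+z)-y_1+\frac{x^2}{y_1-(t-z)}$, which dominates the bound $y_2\geq(t+z)-y_1$ coming from the first inequality. Comparing with $y_2\geq0$ then amounts to deciding the sign of $h(y_1):=(t+z)-y_1+\frac{x^2}{y_1-(t-z)}$. Clearing the denominator and substituting $u=y_1-t$ turns the numerator into $(x^2+z^2)-u^2$, so $h(y_1)\geq0$ exactly when $y_1\leq t+\sqrt{x^2+z^2}$. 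Hence for $y_1\leq t+\sqrt{x^2+z^2}$ the optimal $y_2$ equals $h(y_1)$ and the objective becomes $g(y_1):=\epsilon y_1+(1-\epsilon)(t+z)+(1-\epsilon)\frac{x^2}{y_1-(t-z)}$, whereas for $y_1> t+\sqrt{x^2+z^2}$ one may set $y_2=0$ (the determinant condition then holds automatically) and the objective is just $y_1$, which is increasing.

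Next I would minimise over $y_1$. On $y_1>t-z$ the function $g$ is strictly convex, so setting $g'(y_1)=\epsilon-(1-\epsilon)\frac{x^2}{(y_1-(t-z))^2}=0$ gives the unique unconstrained minimiser $\hat{y}_1=(t-z)+\sqrt{\frac{1-\epsilon}{\epsilon}}\abs{x}$. The feasible interval for the $g$-branch is $[\max\{\omega,t-z\},\,t+\sqrt{x^2+z^2}]$; since $\hat{y}_1\geq t-z$ automatically, the lower bound that can actually bind is $\omega$. Because the $y_1$-branch beyond $t+\sqrt{x^2+z^2}$ is increasing and matches $g$ continuously at the junction (there $h=0$, so $g=y_1$), the global minimiser is obtained by projecting $\hat{y}_1$ onto $[\omega,\,t+\sqrt{x^2+z^2}]$, which is precisely the three-case formula for $y_1^*$; substituting $y_2^*=h(y_1^*)$ back yields the stated objective value $y_1^*+(1-\epsilon)[(t+z)-y_1^*+\frac{x^2}{y_1^*-(t-z)}]$.

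The main obstacle I anticipate is the $y_2$-elimination and the regime switch: one must verify that the determinant bound, not $y_2\geq0$, is the binding lower bound on $y_2$ up to $y_1=t+\sqrt{x^2+z^2}$, and that beyond this threshold the $y_2=0$ branch is both feasible for the determinant condition and monotone, so that it never improves on the endpoint value. A small but necessary bookkeeping point is that the interval $[\omega,\,t+\sqrt{x^2+z^2}]$ is nonempty and that the constraint $y_1\geq t-z$ is subsumed: from $t\geq0$ (which follows from $t\pm z\geq0$) and the Cauchy--Schwarz bound $\omega\leq\sqrt{x^2+z^2}$ one gets $\omega\leq t+\sqrt{x^2+z^2}$, while in the case $\hat{y}_1\leq\omega$ the chain $t-z\leq\hat{y}_1\leq\omega$ shows the $t-z$ bound is inactive.
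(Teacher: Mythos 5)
Your proposal is correct and follows essentially the same route as the paper: eliminate $y_2$ via its binding lower bound, reduce to a univariate convex problem in $y_1$ with stationary point $\hat{y}_1$, and clip to the interval $[\omega,\,t+\sqrt{x^2+z^2}]$. Your explicit Cauchy--Schwarz check that $\omega\leq t+\sqrt{x^2+z^2}$ (so the interval is never empty) is a small point the paper leaves implicit, but the argument is otherwise the same.
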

\begin{proof}
When $x\neq0$ (noncommuting strategy), the feasible region excludes the points $(y_1,y_2)$ where
\begin{align*}
y_1-(t-z) = 0, \text{ or } y_1+y_2-(t+z) = 0
\end{align*}
and so the optimisation becomes
\begin{align*}
&\min_{y_1,y_2} y_1+(1-\epsilon)y_2 \\
&\text{ s. t. } y_1>t-z, y_1\geq\omega \\
&\phantom{\text{ s. t. }} y_2\geq\max\left\{0,(t+z)-y_1+\frac{x^2}{y_1-(t-z)}\right\}
\end{align*}

Here the optimisation splits into two branches. Firstly, consider the branch
$$(t+z)-y_1+\frac{x^2}{y_1-(t-z)}\leq0$$
equivalently under the condition $y_1>t-z$
$$((t-z)-y_1)((t+z)-y_1)-x^2\geq0$$
and explicitly in terms of the roots
\begin{align*}
y_1&\leq \lambda_{\min}:=t-\sqrt{x^2+z^2} \text{ or },\\
 y_1&\geq \lambda_{\max}:=t+\sqrt{x^2+z^2}
\end{align*}
But then $t-\sqrt{x^2+z^2} < t-z < t+\sqrt{x^2+z^2}$ implies that the feasible region is $y_1\geq\max\{\omega,\lambda_{\max}\}$ leading to the optimum value $y_1^*=\max\{\omega,\lambda_{\max}\}$ which is always at least $\lambda_{\max}$. Secondly, the remaining branch
$$(t+z)-y_1+\frac{x^2}{y_1-(t-z)}\geq0\,,$$
which is equivalent to
\begin{align*}
t-\sqrt{x^2+z^2}=:\lambda_{\min} \leq y_1\leq \lambda_{\max}:=t+\sqrt{x^2+z^2}
\end{align*}
could be infeasible depending on $\omega$. However, whenever feasible, i.e. $\omega\leq\lambda_{\max}$, the minimum is upper bounded by the value of the objective function
$$y_1+(1-\epsilon)\left[(t+z)-y_1+\frac{x^2}{y_1-(t-z)}\right]$$
at the feasible point $y_1=\lambda_{\max}$, i.e. for which the objective value is $\lambda_{\max} + (1-\epsilon)*0$. Therefore, without loss of generality we consider this latter branch whenever feasible.

The inner optimisation becomes
\begin{align*}
&\min_{y_1} y_1+(1-\epsilon)\left[(t+z)-y_1+\frac{x^2}{y_1-(t-z)}\right]\\
&\text{ s. t. } y_1>t-z, \omega\leq y_1\leq\lambda_{\max}, \omega\leq\lambda_{\max}
\end{align*}
from which is is clear that the minimum is reached at the stationary point $\hat{y}_1$ which is the largest solution of 
$$\epsilon(\hat{y}_1-(t-z))^2=(1-\epsilon)x^2$$
whenever this point is feasible, or at the endpoints $\omega$ if $y_1^*<\omega$ and $\lambda_{\max}$ if $y_1^*>\lambda_{\max}$. (Note that $x\neq0$ ensures $\hat{y}_1>t-z$ if exists so that the smallest solution is always infeasible.)
\end{proof}


Finally, we present the proof of Lemma \ref{lemmafinal}:
\begin{proof}
With $y_1^*$ given before, we have to solve
\begin{equation}
\begin{aligned}
p_{10}(\delta,\epsilon)=&\min_{t,z,x} f(y_1^*) \\
&\text{ s. t. } 0\preceq\begin{pmatrix}
t+z & x\\
x & t-z
\end{pmatrix}
\preceq\openone\\
&\phantom{\text{ s. t. }} t+z = 1-\delta
\end{aligned}
\end{equation}
This becomes an optimisation over two real variables $z,x$ after eliminating $t$. To see the branch reduction, we consider feasible $(z,x)$ that satisfies
\begin{equation}\label{eq:feasible_zx}
0\preceq\begin{pmatrix}
1-\delta & x\\
x & 1-\delta-2z
\end{pmatrix}
\preceq\openone
\end{equation}
and show that objective value (abuse of notation and redefine the function $f$ eliminating variable $t$)
\begin{equation*}
f(y_1^*):=y_1^*+(1-\epsilon)\left[1-\delta-y_1^*+\frac{x^2}{y_1^*-(1-\delta-2z)}\right]
\end{equation*}
is smaller in the region {\bf I} defined by
$$1-\delta-2z+\sqrt{\frac{1-\epsilon}{\epsilon}}\abs{x}\leq\abs{x\cos2\theta+z\sin2\theta}\,.$$

In the region {\bf III} defined by the inequality
$$1-\delta-2z+\sqrt{\frac{1-\epsilon}{\epsilon}}\abs{x}\geq1-\delta-z+\sqrt{x^2+z^2}$$
the objective function takes the value
\begin{align*}
&f(1-\delta-z+\sqrt{x^2+z^2})\\
&=(1-\delta)+\epsilon(-z+\sqrt{x^2+z^2})+\frac{(1-\epsilon)x^2}{z+\sqrt{x^2+z^2}}\\
&=(1-\delta)+\epsilon(-z+\sqrt{x^2+z^2})-(1-\epsilon)(z-\sqrt{x^2+z^2})\\
&=1-\delta-z+\sqrt{x^2+z^2}
\end{align*}
which is a function of two independent variables $z,x$, and is increasing in terms of $\abs{x}$ for a fixed value of $z$.

Likewise, in the region {\bf II} defined by the inequality
\begin{align*}
\abs{x\cos2\theta+z\sin2\theta} &\leq 1-\delta-2z+\sqrt{\frac{1-\epsilon}{\epsilon}}\abs{x}\\
&\leq 1-\delta-z+\sqrt{x^2+z^2}
\end{align*}
the objective function take the value
\begin{align*}
&f\left(1-\delta-2z+\sqrt{\frac{1-\epsilon}{\epsilon}}\abs{x}\right)\\
&=(1-\epsilon)(1-\delta)+\epsilon\left(1-\delta-2z+\sqrt{\frac{1-\epsilon}{\epsilon}}\abs{x}\right)+\epsilon\\
&=(1-\delta)+\epsilon-2\epsilon z+\sqrt{\epsilon(1-\epsilon)}\abs{x}\,,
\end{align*}
which is also increasing in $\abs{x}$. Moreover, at the boundary between two regions, the objective functions agree.

The argument now goes as follows: for each feasible $z$, we look at the set of feasible $x$ that is defined by~\eqref{eq:feasible_zx}. For any feasible $x_1,x_2$ in region {\bf III} (if exist), since the objective function is increasing, the point with smaller $\abs{x_j}$ achieves a lower objective value. Hence for minimization, it suffices to consider feasible $x$ in the boundary of region {\bf III}. Since this boundary is also contained in region {\bf II}, we have shown that without loss of generality it suffices to consider the feasible $x$ that belong to regions {\bf I} and {\bf II}. Now the argument can be repeated: points $x_3,x_4$ in region {\bf II} with smaller $\abs{x_j}$ achieve small objective value. This reduces the feasible region to region {\bf I} only.
\end{proof}

\section{The optimal solution for $\epsilon=1$}
\label{appb}

In this Appendix, we proceed to solve \eqref{eq:finaleps1}.

\begin{figure}[h]
  \includegraphics[width=\columnwidth]{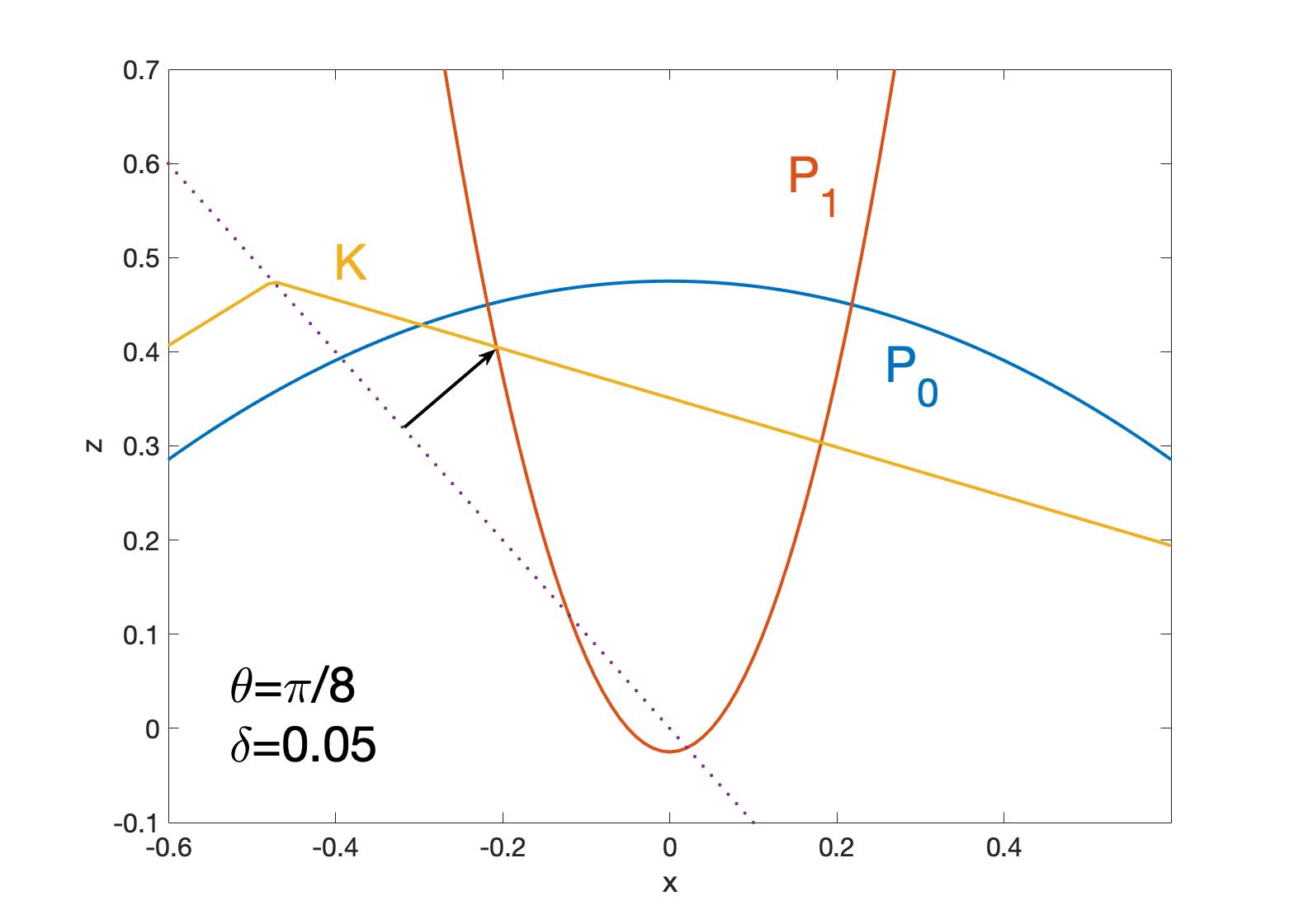}
  \includegraphics[width=\columnwidth]{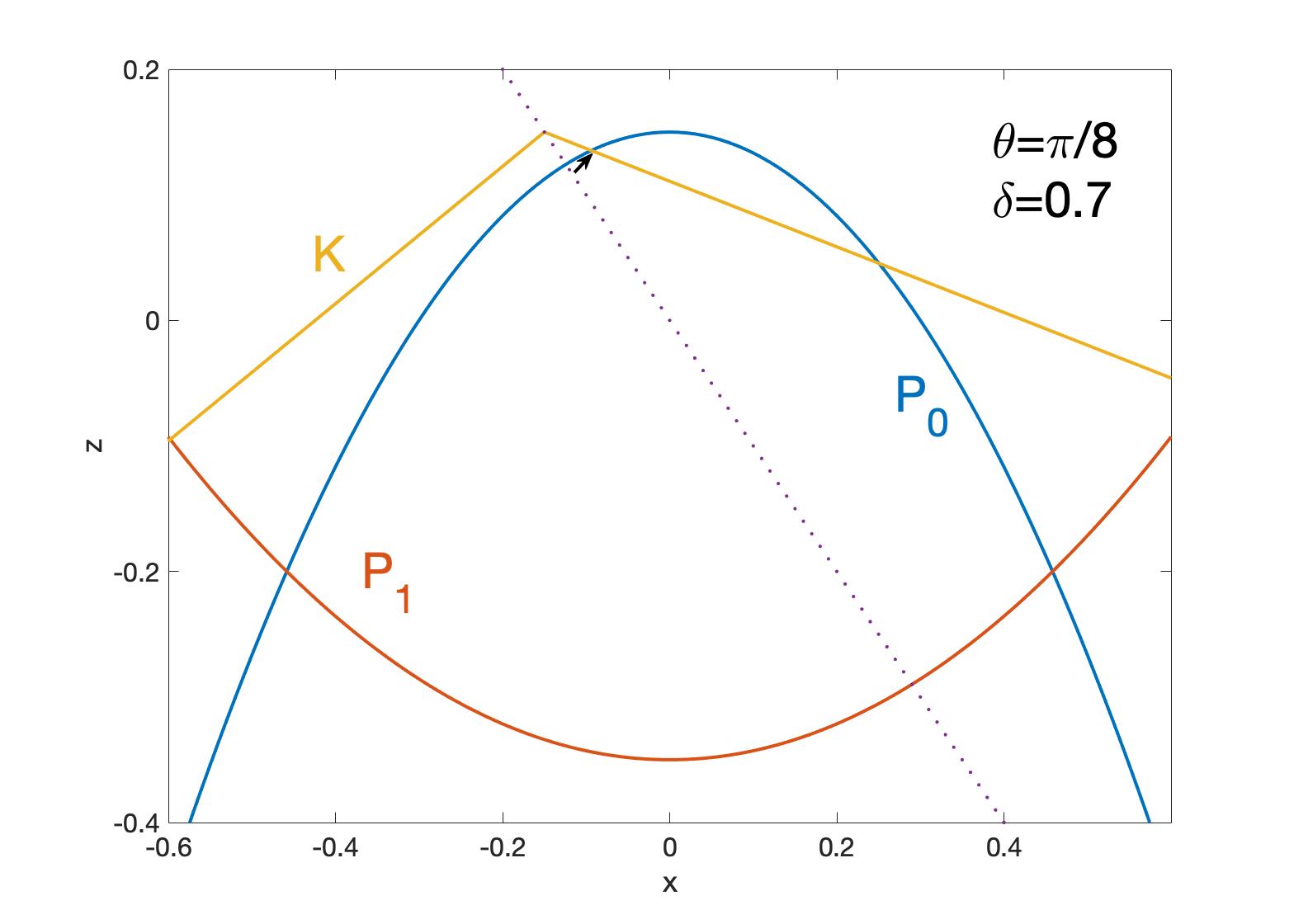}
  \caption{Graphic determination of the solution of the optimisation \eqref{eq:finaleps1}. These two examples are plotted for $\theta=\frac{\pi}{8}$ and two values of $\delta$. The solid lines are the boundaries of the regions defined in Eqs \eqref{eq:P0}, \eqref{eq:P1} and \eqref{eq:K}. The dotted line is $z=-\cot{2\theta}x$: the solution of the optimisation is the point of the feasible region that is closest to this line, in either direction. For small $\delta$, the solution is at the intersection $x_1<0$ of $\mathbf{P_1}$ and $\mathbf{K}$; for large $\delta$, at the intersection $x_0<0$ of $\mathbf{P_0}$ and $\mathbf{K}$.}\label{fig4}
\end{figure}

The feasible region is the intersection of three regions in the $(x,z)$ plane:
\begin{itemize}
\item The constraint $1-\delta-z-\sqrt{x^2+z^2}\geq 0$ defines the region \begin{align}\label{eq:P0}\mathbf{P_0}:&\;z\,\leq\,-\frac{x^2}{2(1-\delta)}+\frac{1-\delta}{2}\,,\end{align} upper-bounded by a parabola whose maximum at $(x,z)=(0,\frac{1-\delta}{2})$.
\item The constraint $1-\delta-z+\sqrt{x^2+z^2}\leq 1$ defines the region \begin{align}\label{eq:P1}\mathbf{P_1}:&\;z\,\geq\, \frac{x^2}{2\delta}-\frac{\delta}{2}\,,\end{align} lower-bounded by parabola whose minimum is at $(x,z)=(0,-\frac{\delta}{2})$
\item The constraint $1-\delta-2z\leq\abs{x\cos2\theta+z\sin2\theta}$ defines the region
\begin{align}\label{eq:K}
\mathbf{K}:&\;\left\{\begin{array}{lcl}
z\,\geq\, \frac{1-\delta+x\cos 2\theta}{2-\sin 2\theta}&\textrm{ for } &x\leq x_k  \\
z\,\geq\, \frac{1-\delta-x\cos 2\theta}{2+\sin 2\theta}&\textrm{ for }  &x\geq x_k
\end{array}\right.
\end{align} lower-bounded by a broken line with kink at $x_k=-\frac{1-\delta}{2}\tan(2\theta)$ the intersection with $x\cos2\theta+z\sin2\theta=0$.
\end{itemize}
The figure of merit to be minimised is $|x\cos2\theta+z\sin2\theta|$: this means that $p_{10}(\delta,\epsilon)$ is given by the smallest distance between the line $z=-\cot{2\theta}x$ and a point of the feasible region. A graphical inspection (see Figs \ref{fig4} and \ref{fig5}) shows that this minimal distance is always achieved by the point that is the intersection of either $\mathbf{P_0}$ or $\mathbf{P_1}$ with the line $z\,\geq\, \frac{1-\delta-x\cos 2\theta}{2+\sin 2\theta}$. These are the points whose $x$ coordinates have been called $x_0$ and $x_1$, given in Eq.~\eqref{x0x1} in the main text. A more fully analytical proof of this result would not bring further clarity (and for good measure, the correctness of the result has been double-checked numerically with the solution of the corresponding SDP).

\begin{figure}[h]
  \includegraphics[width=\columnwidth]{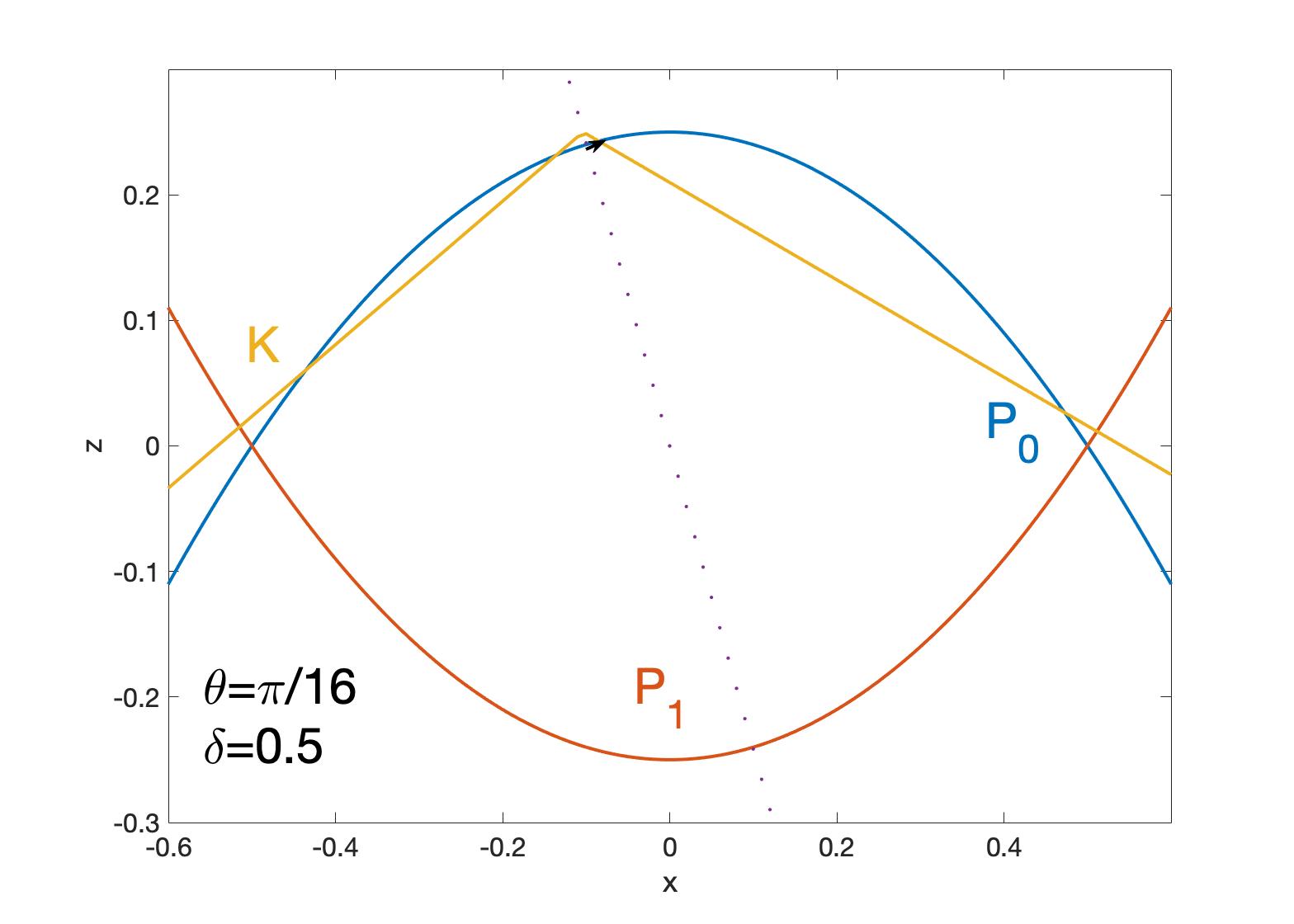}
  \caption{When $\theta$ is reduced, the slope of the left segment of $K$ increases and also cuts $P_0$, whence the feasible region consists of two disjoint sets. Nonetheless, the closest point to the line $z=-\cot{2\theta}x$ remains the one on the right segment of $K$.}\label{fig5}
\end{figure}

\end{appendix}

\end{document}